\documentclass[journal, twocolumn,final]{IEEEtran}
\usepackage{comment}
\usepackage{amsfonts}
\usepackage{amssymb}
\usepackage{amsmath}
\DeclareMathOperator{\Tr}{Tr}
\usepackage{dsfont}
\usepackage{graphicx}
\usepackage{bm}
\usepackage{cite}
\usepackage{bigstrut}
\usepackage{float}
\usepackage{balance}
\usepackage{lipsum}
\usepackage{psfrag}
\usepackage{xcolor}
\usepackage{xfrac}
\usepackage{hyperref}

\usepackage{algorithm}
\usepackage{algpseudocode}

\usepackage[caption=false,font=footnotesize]{subfig}

\usepackage{tikz}
\usepackage{pgfplots}
\pgfplotsset{compat=1.18}
\usepackage{xcolor}
\usepackage{mathrsfs}
\usetikzlibrary{arrows}
\usetikzlibrary{decorations.pathreplacing}
\usetikzlibrary{fadings}
\usetikzlibrary{fit}

\newtheorem{lemma}{Lemma}

\newtheorem{proposition}{Proposition}

\usepackage{mathtools}
\usepackage{comment}
\usepackage{tabularx,array}

\begin{document}
\IEEEoverridecommandlockouts
\newcommand\norm[1]{\left\lVert#1\right\rVert}
\pgfplotsset{compat=1.17}
\usepgfplotslibrary{patchplots}   

\makeatletter
\long\def\@makecaption#1#2{%
    \ifx\@captype\@IEEEtablestring%
        \footnotesize
        \bgroup
        \par
        \centering
        \@IEEEtabletopskipstrut
        {\normalfont\footnotesize #1}\\
        {\normalfont\footnotesize\scshape #2}
        \par
        \addvspace{0.5\baselineskip}
        \egroup
        \@IEEEtablecaptionsepspace
    \else
        \@IEEEfigurecaptionsepspace
        \parbox[t]{\hsize}{%
            \normalfont\footnotesize
            \centering
            {#1.}\nobreakspace\nobreakspace #2\par
        }%
    \fi
}
\makeatother

\definecolor{myblue}{RGB}{0,114,178}
\definecolor{myorange}{RGB}{213,94,0}
\definecolor{mygreen}{RGB}{0,158,115}
\definecolor{mypurple}{RGB}{117,112,179}

\title{Secure ISAC with Sensing Privacy under
Eavesdropper Uncertainty}
\author{Pigi P. Papanikolaou,~\IEEEmembership{Graduate Student Member,~IEEE}, Dimitrios Bozanis,~\IEEEmembership{Graduate Student Member,~IEEE}, Sotiris A. Tegos,~\IEEEmembership{Senior Member,~IEEE}, Christos Masouros,~\IEEEmembership{Fellow,~IEEE},\\ and George K. Karagiannidis,~\IEEEmembership{Fellow,~IEEE}

\thanks{P. P. Papanikolaou, D. Bozanis, S. A. Tegos and G. K. Karagiannidis are with the Department of Electrical and Computer Engineering, Aristotle University of Thessaloniki, 54124 Thessaloniki, Greece (e-mails: pigipapa@auth.gr, dimimpoz@ece.auth.gr, tegosoti@auth.gr, geokarag@auth.gr).}
\thanks{C. Masouros is with the Department of Electronic and Electrical Engineering, University College London, London, UK (e-mail: c.masouros@ucl.ac.uk).}
\thanks{}
\vspace{-5mm}

}

\maketitle

\begin{abstract}
This paper investigates the joint protection of confidential data and legitimate-user directional information in integrated sensing and communication (ISAC) networks. We consider a multiuser downlink in which passive multi-antenna eavesdroppers (Eves) attempt to decode confidential signals while exploiting legitimate-user reflections for unauthorized angular sensing. To address both threats, we develop a robust covariance-design framework that jointly limits Eve decoding, maintains the transmitter’s accuracy in estimating the Eves’ directions and shifts the dominant passive-sensing response toward prescribed deceptive directions. A Bayesian angular prior and the corresponding Bayesian Cram\'er--Rao bound (BCRB) characterize the transmitter's eavesdropper-angle estimation accuracy. Angular uncertainty is represented through geometry-consistent samples, such that each candidate Eve direction jointly determines the corresponding transmitter--Eve channel, user--Eve bearing, and deceptive direction. The resulting design balances worst-user secrecy, sensing accuracy, sensing privacy, and deception power. Robust Eve-decoding constraints are handled through finite sufficient conditions with intersample margins, while continuous ghost dominance is enforced using interval sum-of-squares (SOS) constraints. The resulting nonconvex problem is addressed through successive convex approximation (SCA) and semidefinite relaxation. Numerical results show that the proposed design effectively preserves secrecy and sensing privacy under Eve-angle uncertainty, provides controlled angular deception in single- and multiple-Eve scenarios, and outperforms benchmarks.
\end{abstract}

\begin{IEEEkeywords}
Integrated sensing and communication, physical layer security, sensing deception, passive sensing, Bayesian Cramér--Rao bound, successive convex approximation.
\end{IEEEkeywords}

    
\section{Introduction}
\label{sec:intro}

Integrated sensing and communication (ISAC) is regarded a key technology for future wireless networks because it enables sensing and communication functions to share spectrum, hardware, antenna arrays and transmitted waveforms \cite{Zhang2021JCR,Liu2022ISAC}. Building on earlier radar--communication coexistence studies \cite{Zheng2019Coexistence}, ISAC has evolved toward the joint design of dual-functional transmit signals and processing architectures. This integration can improve spectral efficiency, reduce deployment costs, and support applications such as vehicular networks, smart manufacturing, environmental monitoring, and ubiquitous Internet-of-Things (IoT) services \cite{Cui2021UbiquitousIoT}.

Beyond these benefits, the utilization of a shared waveform tightly couples the communication and sensing functionalities. The transmitted signal must deliver information to legitimate users while illuminating targets or regions of interest for environmental sensing. Reliable sensing generally requires sufficient energy to be directed toward these directions, for example, to achieve a desired transmit beampattern or to reduce the Cram\'er--Rao bound of the estimated parameters \cite{18,Boz_crb}. However, when the illuminating waveform also carries confidential information, the resulting signal exposure may enable a target or another unauthorized receiver to intercept and decode the transmitted data \cite{sec2,sec3}.

Yet, the information exposed by the shared waveform is not limited to the transmitted data. An unauthorized receiver may also process the direct and scattered signal components to infer information about legitimate users or the surrounding environment. ISAC systems therefore face two distinct but coupled security concerns: communication confidentiality, which aims to prevent unauthorized data decoding, and sensing privacy, which aims to restrict the inference of spatial information \cite{Wei2022Multifunctional}. However, even when unauthorized data decoding is effectively suppressed, an eavesdropper may still extract meaningful information about legitimate users or the surrounding environment from its received observations, leaving the issue of sensing privacy unresolved.

Accordingly, security in ISAC must extend beyond the protection of confidential data to also account for the sensing information exposed by the shared waveform. Communication confidentiality and sensing privacy therefore emerge as complementary dimensions of ISAC protection, motivating the development of transmission strategies that address both unauthorized decoding and unauthorized sensing.

\subsection{State of the Art}

Research on secure ISAC systems has focused primarily on protecting communication confidentiality. Secrecy rate (SR) optimization for joint radar and communication systems was considered in
\cite{Deligiannis2018SecrecyRadar}, while secure dual-functional radar--communication designs in which sensing targets may also act as information eavesdroppers were developed in
\cite{Su2021MaliciousTargets,Su2022SecureDFRC}. These approaches employ information beamforming, dedicated radar signals, artificial noise (AN), or multiuser (MU) interference to preserve the communication quality of legitimate users while limiting the targets' decoding capabilities. Moving beyond single-snapshot designs, \cite{Xu2022RobustSecureISAC} developed a robust resource-allocation framework that jointly optimizes the snapshot durations, information beamformers, and AN covariance matrices over a sequence of variable-length snapshots while accounting for imperfect CSI at both legitimate users and potential eavesdroppers.

The sensing functionality of ISAC has also been exploited to improve communication security by providing information about potential eavesdroppers. In \cite{Su2024SensingAssisted}, the transmitter first estimates the angular directions of potential eavesdroppers and subsequently accounts for the estimation errors through a widened transmit beampattern, thereby enforcing communication security over the possible eavesdropper region while jointly optimizing the SR and sensing accuracy. Sensing information has similarly been used to track and suppress mobile aerial eavesdroppers
\cite{Liu2023AerialEve} and to enable sensing-aided covert transmission under perfect and imperfect knowledge of the adversary's channel
\cite{Wang2024SensingAidedCovert}. Bayesian secure ISAC designs have further incorporated prior target-location distributions and Bayesian estimation bounds to account for sensing uncertainty in waveform optimization
\cite{Hou2023TargetDistribution,Su2024BCRB}. While these approaches account for uncertainty in the parameter estimation, \cite{Boz_iot} further incorporates it into security by enforcing destructive interference over the entire eavesdropper uncertainty region. Nevertheless, in these studies, sensing primarily assists the legitimate transmitter in localizing potential eavesdroppers or enforcing communication security. The sensing information that an unauthorized receiver may independently infer from the transmitted waveform is not explicitly protected.

Therefore, sensing privacy has recently emerged as a distinct aspect of ISAC security. In
\cite{Zou2024SensingSecurity}, AN-aided beamforming is used to improve the mutual information available to a legitimate sensing receiver while limiting the sensing mutual information at an unauthorized receiver. Information and sensing eavesdroppers are jointly considered in a cell-free ISAC architecture in
\cite{Ren2024CellFreeSecure}. Such approaches aim to degrade the unauthorized receiver's detection, estimation, or overall sensing performance. However, suppressing the information available to a sensing eavesdropper generally produces an uncontrolled estimation error. In particular, it does not determine which incorrect location or direction will emerge as the dominant output of the eavesdropper's sensing processor.

To go beyond sensing degradation, more recent studies have investigated controlled sensing deception. Artificial imperfections in the ambiguity function have been introduced to generate false peaks in an unauthorized receiver's range profile
\cite{Han2025SensingSecure}. Near-field scatterers have also been deliberately illuminated to produce misleading sensing observations
\cite{Chen2025NearFieldDeception}, while a MIMO-OFDM ISAC framework for generating artificial targets with prescribed angle, range, and velocity was developed in
\cite{Yang2026DualSecurity}. These studies demonstrate the potential of actively controlling the sensing information perceived by an unauthorized receiver rather than merely reducing its sensing quality.

\subsection{Motivation and Contributions}

Despite these advances, an important gap remains. Existing sensing privacy and controlled deception methods either suppress unauthorized sensing without explicitly controlling the resulting estimate or generate misleading sensing responses through ambiguity-function shaping, known scatterers, or receiver-independent waveform design. However, they do not explicitly model uncertainty in the geometry of the unauthorized sensing receiver and consistently propagate its effects across the communication and sensing models. In particular, such uncertainty induces coupled variations in the eavesdropper channel, the observed target or user directions, and the prescribed deceptive response. Treating these quantities independently may therefore lead to physically inconsistent and non-robust designs.

To address this gap, this paper develops a secure ISAC framework that jointly protects communication confidentiality and sensing privacy under imperfect knowledge of the eavesdropper geometry. We consider an MU downlink in which a multi-antenna transmitter serves multiple single-antenna legitimate users while sensing the directions of passive multi-antenna eavesdroppers. Although the legitimate users are not intended sensing targets of the transmitter, their reflected signal components may be exploited by the eavesdroppers to perform unauthorized angular sensing and infer the users' directions. The proposed design therefore restricts the eavesdroppers' data-decoding capability while shaping their passive sensing observations so that the dominant response appears near a prescribed virtual direction rather than the true direction of a legitimate user. To the best of our knowledge, this is the first secure MU ISAC design to jointly address communication confidentiality and controlled sensing privacy under uncertain eavesdropper geometry.


The considered threat focuses specifically on the spatial-information leakage created by the downlink waveform. During the considered downlink interval, the legitimate users operate as receivers and no exploitable uplink transmission is assumed. Hence, a passive sensing-capable Eve may exploit the user-reflected downlink component to infer a legitimate user's direction, as considered in \cite{Musallam}. 
Each Bob represents the physical communication user or platform associated with the communication receiver, whose scattering response is characterized through an effective user RCS. 

The main contributions of this paper are summarized as follows:

\begin{itemize}

    \item We develop a secure MU-ISAC framework in which passive multi-antenna eavesdroppers threaten both communication confidentiality and sensing privacy. Besides attempting to decode the confidential downlink signals, each eavesdropper exploits the reflection from a selected legitimate user to infer that user's direction through passive sensing. Rather than merely degrading this sensing capability, the proposed framework controls the resulting estimate by inducing a dominant response at a prescribed virtual direction.

    \item We derive the Bayesian Cramér-Rao Bound (BCRB) for Alice’s estimation of each eavesdropper direction and use it as a sensing metric suitable for incorporating prior information and directional uncertainty. The resulting uncertainty sectors are accounted for through a geometry-consistent model, where each admissible eavesdropper direction jointly determines the corresponding eavesdropper channel, the legitimate-user directions observed in the passive scan, and the prescribed virtual direction.

    \item We formulate a robust optimization problem that jointly designs the information and deception covariance matrices. The objective maximizes the worst-user secrecy margin and the separation between the prescribed virtual response and competing passive-scan responses, while minimizing the BCRB for accurate sensing and the deception-power fraction. The design further enforces legitimate-user QoS and eavesdropper-decoding constraints over the entire uncertainty region, while a conservative intersample margin is introduced to account for variations between adjacent uncertainty samples.

    \item We develop a tractable successive convex approximation (SCA)-based solution for the resulting nonconvex joint covariance design. The rank-one beamforming constraints are relaxed, the continuous Eve-decoding requirement is replaced by finite sufficient constraints with intersample robustness margins, the continuous ghost-dominance requirement is reformulated through interval sum-of-squares (SOS) constraints, while tight concave lower bounds are constructed for the legitimate-user rates at each iteration, yielding a convex semidefinite-constrained subproblem. Proximal regularization is incorporated to stabilize the covariance updates, and beamforming vectors are recovered from the optimized covariance matrices. 


\end{itemize}

Moreover, numerical results validate the proposed framework and characterize its secrecy, sensing-privacy, robustness, objective-weight tradeoffs, convergence, and MU scalability.

\subsection{Structure}

The remainder of this paper is organized as follows. Section~\ref{sec:SysMod} presents the system, channel, communication, and passive-sensing models. Section~\ref{sec:BayesianUncertainty} presents the BCRB-based uncertainty model and the resulting angular sectors for the estimated eavesdropper directions. The joint communication-security and sensing-deception design is formulated in Section~\ref{sec:Optsol}, where the proposed SCA-based solution is also presented. Numerical results are reported in Section~\ref{sec:Num}, and Section~\ref{sec:Conc} concludes the paper.

\section{System Model}\label{sec:SysMod}

\begin{figure}
    \centering
    \includegraphics[width=0.9\columnwidth]{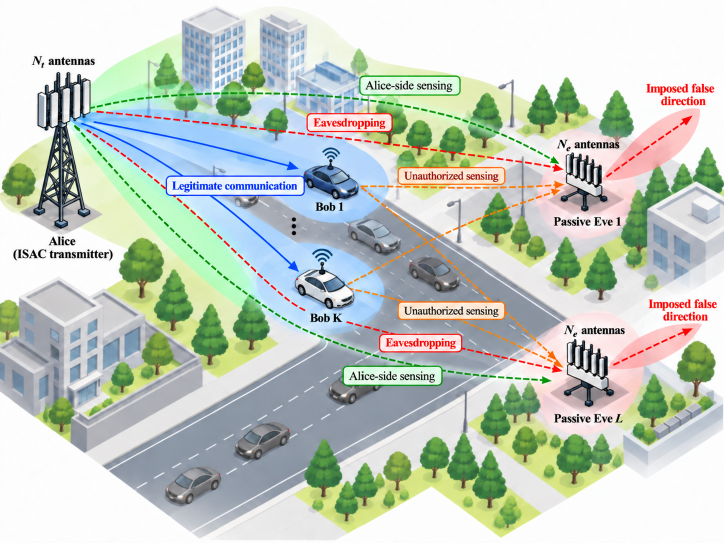}
    \vspace{-2mm}
    \caption{Illustration of the system model.}
    \vspace{-4mm}
    \label{fig:sys_model}
\end{figure}

We consider a secure downlink ISAC system composed of one multi-antenna transmitter, referred to as Alice, $K$ legitimate users, referred to as Bobs, and $L$ passive eavesdroppers, referred to as Eves, whose directions are sensed by Alice for security purposes, 
as shown in Fig.~\ref{fig:sys_model}. Alice serves the Bobs with confidential downlink communication while sensing the directions of the passive Eves. 
Alice is equipped with a uniform linear array (ULA) with $N_t$ antennas. Each Bob is a single-antenna communication receiver, while each Eve employs a ULA with $N_e$ antennas. 
We assume that the ULAs at Alice and the Eves are parallel, with their zero-angle reference aligned with the global $x$-axis. Accordingly, all azimuth angles are measured counterclockwise from the global $x$-axis and can be used directly in the corresponding steering vectors. The node positions are denoted by 
\begin{equation}
\begin{aligned}
&\mathbf p_A=[x_A,y_A]^T\in\mathbb R^2, \mathbf p_{B,k}=[x_{B,k},y_{B,k}]^T\in\mathbb R^2, \\
&\mathbf p_{E,\ell}=[x_{E,\ell},y_{E,\ell}]^T\in\mathbb R^2,
\end{aligned}
\end{equation}
for $k\in\{1,\ldots,K\}$ and $\ell\in\{1,\ldots,L\}$, respectively. As associated legitimate users, the Bobs' positions and downlink CSI are assumed known at Alice.


Alice communicates with the Bobs while sensing the passive Eves.
Although the Bobs are not intended sensing targets of Alice, their
downlink-induced reflections may be exploited by the Eves for
unauthorized angular sensing. Hence, the system must protect the
confidential downlink communication while controlling the angular
information that passive Eves can infer from the downlink waveform.

\subsection{Transmit Signal Model}

Alice transmits information-bearing signals to the Bobs
together with a deception signal designed to limit the Eves'
ability to decode confidential information and shape their
passive-sensing responses. Accordingly, the transmit signal is
given by
\begin{equation}
    \mathbf x
    =
    \sum_{k=1}^{K}\mathbf w_k s_k
    +
    \mathbf z,
\end{equation}
where $s_k\sim\mathcal{CN}(0,1)$ denotes the information symbol
intended for the $k$-th Bob, $\mathbf w_k\in\mathbb C^{N_t}$ is
the corresponding beamforming vector, and
$\mathbf z\sim\mathcal{CN}(\mathbf 0,\mathbf Z)$ denotes the
deception signal. The information symbols and the deception
signal are assumed to be mutually independent.

The covariance matrix of the information signal intended for the
$k$-th Bob is
\begin{equation}
    \mathbf W_k
    =
    \mathbf w_k\mathbf w_k^H
    \succeq \mathbf 0,
    \qquad
    \operatorname{rank}(\mathbf W_k)=1.
\end{equation}
The covariance matrix of the deception signal is
\begin{equation}
    \mathbf Z
    =
    \mathbb E
    \left\{
        \mathbf z\mathbf z^H
    \right\}
    \succeq
    \mathbf 0.
\end{equation}
Consequently, the total transmit covariance is
\begin{equation}
    \mathbf R_x
    =
    \mathbb E\{\mathbf x\mathbf x^H\}
    =
    \sum_{k=1}^{K}\mathbf W_k
    +
    \mathbf Z.
\end{equation}
The covariance $\mathbf R_x$ characterizes the overall spatial
distribution of the transmitted power and enters the legitimate
communication, eavesdropper decoding, transmitter-side sensing, and
Eves' passive-sensing models.

\subsection{Communication Signal Model}

The Alice--Bob channel for the $k$-th Bob is denoted by $\mathbf h_{B,k}\in\mathbb C^{N_t}$. Since Bob $k$ has a single antenna, its received signal is scalar and is given by
\begin{equation}
    y_{B,k}
    =
    \mathbf h_{B,k}^H\mathbf x
    +
    n_{B,k},
    \qquad
    n_{B,k}\sim\mathcal{CN}(0,\sigma_b^2),
\end{equation}
where $n_{B,k}$ denotes the additive white Gaussian noise at Bob $k$, with variance $\sigma_b^2$. For compact notation, we define the rank-one channel matrix
\begin{equation}
    \mathbf H_{B,k}
    =
    \mathbf h_{B,k}\mathbf h_{B,k}^H
    \succeq
    \mathbf 0.
\end{equation}
Since the $k$-th Bob decodes only its intended information stream, the
signals intended for the other Bobs and the deception signals are
treated as interference. Accordingly, we define its desired
signal power as
\begin{equation}
    S_{B,k}
    =
    \operatorname{\Tr}
    \left(
        \mathbf H_{B,k}\mathbf W_k
    \right),
\end{equation}
and its interference-plus-noise power as
\begin{equation}
    I_{B,k}
    =
    \sum_{\substack{j=1\\j\neq k}}^{K}
    \operatorname{\Tr}
    \left(
        \mathbf H_{B,k}\mathbf W_j
    \right)
    +
    \operatorname{\Tr}
    \left(
        \mathbf H_{B,k}\mathbf Z
    \right)
    +
    \sigma_b^2,
\end{equation}
where $j$ indexes the information streams intended for the other Bobs,
and $\operatorname{\Tr}(\cdot)$ denotes the trace operator. The
received SINR at Bob $k$ is therefore
\begin{equation}
    \mathrm{SINR}_{B,k}
    =
    \frac{S_{B,k}}{I_{B,k}},
\end{equation}
and its corresponding achievable rate is
\begin{equation}
    R_{B,k}
    =
    \log_2
    \left(
        1+\mathrm{SINR}_{B,k}
    \right).
\end{equation}

\subsection{Data Eavesdropping Model}

The $\ell$-th Eve receives Alice's downlink signal through the
physical Alice--Eve channel
$\mathbf H_{AE,\ell}\in\mathbb C^{N_e\times N_t}$. Thus, its received
signal is
\begin{equation}
    \mathbf y_{E,\ell}
    =
    \mathbf H_{AE,\ell}\mathbf x
    +
    \mathbf n_{E,\ell},
    \qquad
    \mathbf n_{E,\ell}
    \sim
    \mathcal{CN}
    \left(
        \mathbf 0,
        \sigma_e^2\mathbf I_{N_e}
    \right),
\end{equation}
where $\mathbf n_{E,\ell}$ is the additive white Gaussian noise
vector.

When the $\ell$-th Eve attempts to decode the information stream
intended for the $k$-th Bob, the transmit covariance of all remaining
signals is
\begin{equation}
    \mathbf X_k
    =
    \sum_{\substack{j=1\\j\neq k}}^{K}
    \mathbf W_j
    +
    \mathbf Z.
\end{equation}
The corresponding useful-signal and interference-plus-noise
covariance matrices are
\begin{equation}
    \mathbf R_{E,\ell k}^{\rm sig}
    =
    \mathbf H_{AE,\ell}
    \mathbf W_k
    \mathbf H_{AE,\ell}^H,
\end{equation}
and
\begin{equation}
    \mathbf R_{E,\ell k}^{\rm int}
    =
    \mathbf H_{AE,\ell}
    \mathbf X_k
    \mathbf H_{AE,\ell}^H
    +
    \sigma_e^2\mathbf I_{N_e},
\end{equation}
respectively.

Since
$\mathbf R_{E,\ell k}^{\rm int}\succ\mathbf 0$, the maximum
post-combining SINR available to the $\ell$-th Eve for decoding the
information stream intended for the $k$-th Bob is
\begin{equation}
    \gamma_{E,\ell k}
    =
    \lambda_{\max}
    \left(
        \left(
            \mathbf R_{E,\ell k}^{\rm int}
        \right)^{-1/2}
        \mathbf R_{E,\ell k}^{\rm sig}
        \left(
            \mathbf R_{E,\ell k}^{\rm int}
        \right)^{-1/2}
    \right),
    \label{eq:eve_post_combining_sinr}
\end{equation}
where $\lambda_{\max}(\cdot)$ denotes the largest eigenvalue of its
matrix argument. The inverse square root in
\eqref{eq:eve_post_combining_sinr} is well defined because
$\mathbf R_{E,\ell k}^{\rm int}$ is positive definite. Accordingly,
the achievable decoding rate of the $\ell$-th Eve is
\begin{equation}
    R_{E,\ell k}
    =
    \log_2
    \left(
        1+\gamma_{E,\ell k}
    \right).
    \label{eq:eve_decoding_rate}
\end{equation}

The SR associated with the $k$-th Bob is defined with
respect to the Eve having the strongest data-decoding capability as
\begin{equation}
    R_{{\rm sec},k}
    =
    \left[
        R_{B,k}
        -
        \max_{\ell\in\{1,\ldots,L\}}
        R_{E,\ell k}
    \right]^+,
    \label{eq:bob_secrecy_rate}
\end{equation}
where $[x]^+\triangleq\max\{x,0\}$. The corresponding worst-user SR is
\begin{equation}
    R_{\rm sec}^{\min}
    =
    \min_{k\in\{1,\ldots,K\}}
    R_{{\rm sec},k}.
    \label{eq:worst_user_secrecy_rate}
\end{equation}

\subsection{Alice-Side Monostatic Sensing Model of Eve}
\label{subsec:alice_sensing}

Alice uses the echoes of its downlink waveform to estimate the
directions of the passive Eves. Let $\theta_{AE,\ell}$ denote the
physical angle from Alice toward the $\ell$-th Eve. The steering
vector of Alice's ULA is
\begin{equation}
    \mathbf a_A(\theta)
    =
    \frac{1}{\sqrt{N_t}}
    \begin{bmatrix}
        1 &
        e^{j\pi\sin(\theta)} &
        \cdots &
        e^{j\pi(N_t-1)\sin(\theta)}
    \end{bmatrix}^{T}.
\end{equation}

Since Alice employs the same ULA for transmission and reception, the monostatic round-trip array-response matrix associated with angle $\theta$ is
\begin{equation}
    \mathbf A(\theta)
    =
    \mathbf a_A^*(\theta)
    \mathbf a_A^T(\theta).
\end{equation}
Thus, the received sensing signal of the $\ell$-th Eve at snapshot $q$ is modeled as
\begin{equation}
    \mathbf y_{A,\ell}[q]
    =
    \alpha_{A,\ell}
    \mathbf A(\theta_{AE,\ell})
    \mathbf x[q]
    +
    \mathbf n_{A,\ell}[q],
    \qquad
    q=1,\ldots,L_A,
    \label{eq:alice_sensing_model}
\end{equation}
where $L_A$ denotes the number of sensing snapshots, $\mathbf{x}[q]\in\mathbb{C}^{N_t}$ is the transmit vector at the $q$-th sensing snapshot, 
$\alpha_{A,\ell}\in\mathbb{C}$ is the effective complex reflection coefficient that incorporates the monostatic round-trip propagation attenuation and the effective RCS of Eve $\ell$, assumed to remain constant over the $L_A$ sensing snapshots
and 
\begin{equation} \mathbf{n}_{A,\ell}[q] \sim \mathcal{CN}\!\left( \mathbf{0}, \sigma_{rA}^{2}\mathbf{I}_{N_t} \right) \end{equation}
is the corresponding sensing-noise vector.

\subsection{Eve-Side Passive Sensing of Bobs}
\label{subsec:eve_passive_sensing}

In addition to attempting to decode the confidential downlink information, Eve $\ell$ may exploit the downlink-induced reflection from the physical user associated with Bob $k$ for passive angular sensing.
Let $\theta_{AB,k}$ denote the angle from Alice toward the $k$-th Bob and
$\theta_{EB,\ell k}$ the angle from the $\ell$-th Eve toward the
$k$-th Bob, given by
\begin{equation}
\begin{aligned}
    \theta_{AB,k}
    &=
    \operatorname{atan2}
    \left(
        y_{B,k}-y_A,
        x_{B,k}-x_A
    \right),
    \\
    \theta_{EB,\ell k}
    &=
    \operatorname{atan2}
    \left(
        y_{B,k}-y_{E,\ell},
        x_{B,k}-x_{E,\ell}
    \right),
\end{aligned}
\end{equation}
where $\operatorname{atan2}(y,x)$ denotes the four-quadrant inverse
tangent, which returns the angle of the vector $[x,y]^T$ with respect to the positive $x$-axis.
The steering vector of Eve's ULA is
\begin{equation}
    \mathbf a_E(\theta)
    =
    \frac{1}{\sqrt{N_e}}
    \begin{bmatrix}
        1 &
        e^{j\pi\sin(\theta)} &
        \cdots &
        e^{j\pi(N_e-1)\sin(\theta)}
    \end{bmatrix}^{T}.
    \label{ae}
\end{equation}

The directional transmit energy toward Bob $k$ is
\begin{equation}
    I_k(\mathbf R_x)
    =
    \mathbf a_A^H(\theta_{AB,k})
    \mathbf R_x
    \mathbf a_A(\theta_{AB,k}).
\end{equation}

The passive-sensing observation at Eve $\ell$ associated with the
$k$-th Bob is modeled as
\begin{equation}
    \mathbf y_{E,\ell k}^{\rm ps}
    =
    \mathbf a_E(\theta_{EB,\ell k})r_{\ell k}
    +
    \mathbf H_{AE,\ell}\mathbf x
    +
    \mathbf n_{pE,\ell},
\end{equation}
where $r_{\ell k}$ denotes the effective reflected waveform and
satisfies
\begin{equation}
    \mathbb E
    \left\{
        |r_{\ell k}|^2
    \right\}
    =
    \rho_{\ell k}I_k(\mathbf R_x),
    \label{rho}
\end{equation}
where $\rho_{\ell k}\geq 0$ denotes the effective bistatic reflected-power coefficient, accounting for the Alice--Bob propagation attenuation, the effective RCS of the physical user associated with Bob $k$, and the Bob--Eve propagation attenuation.
Moreover,
\begin{equation}
    \mathbf n_{pE,\ell}
    \sim
    \mathcal{CN}
    \left(
        \mathbf 0,
        \sigma_{pE}^2\mathbf I_{N_e}
    \right)
\end{equation}
is the passive-sensing noise vector.

We adopt a second-order uncorrelated-scattering model in which the
Bob-reflected component is uncorrelated with the direct Alice--Eve
component and the receiver noise. This model is consistent with an independent zero-mean random-phase scattering
coefficient on the Bob-reflected path. The covariance of the direct
Alice--Eve component is therefore
\begin{equation}
    \mathbf R_{{\rm dir},\ell}
    =
    \mathbf H_{AE,\ell}
    \mathbf R_x
    \mathbf H_{AE,\ell}^H.
\end{equation}
Accordingly, the covariance of the passive-sensing observation is
\begin{equation}
\begin{aligned}
    \mathbf R_{E,\ell k}^{\rm ps}
    &=
    \;
    \rho_{\ell k}
    I_k(\mathbf R_x)
    \mathbf a_E(\theta_{EB,\ell k})
    \mathbf a_E^H(\theta_{EB,\ell k})
    \\
    &+
    \mathbf R_{{\rm dir},\ell}
    +
    \sigma_{pE}^2\mathbf I_{N_e}.
\end{aligned}
\end{equation}
Thus, for a candidate scan angle $\vartheta$, Eve $\ell$ evaluates
the spatial response
\begin{equation}
\begin{aligned}
    P_{\ell k}(\vartheta)
    &=
    \;
    \rho_{\ell k}
    I_k(\mathbf R_x)
    \left|
        \mathbf a_E^H(\vartheta)
        \mathbf a_E(\theta_{EB,\ell k})
    \right|^2
    \\
    &+
    \mathbf a_E^H(\vartheta)
    \mathbf R_{{\rm dir},\ell}
    \mathbf a_E(\vartheta)
    +
    \sigma_{pE}^2.
\end{aligned}
\end{equation}
The dominant peak of $P_{\ell k}(\vartheta)$ determines the direction
inferred by Eve $\ell$ for the $k$-th Bob.

\section{Eve-Angle Estimation and Uncertainty Modeling}
\label{sec:BayesianUncertainty}

Alice senses the direction of each Eve through the monostatic
sensing model introduced in Section~\ref{subsec:alice_sensing}. The Eve direction is modeled using an angular prior whose support defines the uncertainty sector considered in the robust design. This section first derives the corresponding BCRB and then propagates the angular uncertainty consistently to the Alice--Eve channels, the Bob directions observed by the Eves, and the prescribed virtual
directions.

\subsection{BCRB of the Eve Direction}

Let $\theta_{AE,\ell}$ be the unknown direction of Eve $\ell$ from Alice. It is modeled as a random variable with prior probability
density $p_\ell(\theta)$ over
$\theta\in[\theta_{\ell,1},\theta_{\ell,2}]$. 
The BCRB associated with Alice's estimation of
$\theta_{AE,\ell}$ is characterized in the following lemma.

\begin{lemma}
\label{lem:eve_bcrb}
For the monostatic sensing model in
\eqref{eq:alice_sensing_model}, define the unknown parameter vector
associated with the $\ell$-th Eve as
\begin{equation}
    \boldsymbol{\eta}_{\ell}
    =
    \left[
        \theta_{AE,\ell},
        \Re\{\alpha_{A,\ell}\},
        \Im\{\alpha_{A,\ell}\}
    \right]^T.
\end{equation} 
Since $\alpha_{A,\ell}$ is independent of $\theta_{AE,\ell}$ and the transmitted signal, and is circularly symmetric, we have \begin{equation}
\mathbb{E}\{\alpha_{A,\ell}\}=0, \qquad \xi_{A,\ell} \triangleq \mathbb{E}\!\left\{|\alpha_{A,\ell}|^{2}\right\}.
\end{equation}
Then, the BCRB of the direction of the $\ell$-th Eve is
\begin{equation}
    \mathrm{BCRB}_{\ell}(\mathbf R_x)
    =
    \left[
        \operatorname{Tr}
        \left(
            \mathbf Q_{B,\ell}\mathbf R_x
        \right)
        +
        J_{P,\ell}
    \right]^{-1},
    \label{eq:bcrb_lemma}
\end{equation}
where
\begin{equation}
    J_{P,\ell}
    =
    \int_{\theta_{\ell,1}}^{\theta_{\ell,2}}
    \left(
        \frac{
            \partial\log p_\ell(\theta)
        }{
            \partial\theta
        }
    \right)^2
    p_\ell(\theta)
    \,d\theta
    \label{eq:prior_fisher_lemma}
\end{equation}
is the prior Fisher information, and
\begin{equation}
    \mathbf Q_{B,\ell}
    =
    \int_{\theta_{\ell,1}}^{\theta_{\ell,2}}
    \frac{
        2L_A\xi_{A,\ell}
    }{
        \sigma_{rA}^2
    }
    \mathbf Q_A(\theta)
    p_\ell(\theta)
    \,d\theta
    \label{eq:QB_lemma}
\end{equation}
is the prior-averaged sensing-information matrix, with
\begin{equation}
    \mathbf Q_A(\theta)
    =
    \dot{\mathbf A}^{H}(\theta)
    \dot{\mathbf A}(\theta),
    \qquad
    \dot{\mathbf A}(\theta)
    =
    \frac{
        \partial\mathbf A(\theta)
    }{
        \partial\theta
    }.
\end{equation}
\end{lemma}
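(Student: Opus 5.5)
The plan is to compute the Bayesian information matrix (BIM) $\mathbf J_{B,\ell}=\mathbb E_{\boldsymbol\eta_\ell,\mathbf y}\{\mathbf J_D\}+\mathbf J_P$ for the full parameter vector $\boldsymbol\eta_\ell$. Here $\mathbf J_D$ is the conditional (data) Fisher information of \eqref{eq:alice_sensing_model} given $\boldsymbol\eta_\ell$, and $\mathbf J_P$ is the prior information. The bound on $\theta_{AE,\ell}$ is then the $(1,1)$ entry of $\mathbf J_{B,\ell}^{-1}$. I will show that after prior averaging the BIM is block diagonal between $\theta_{AE,\ell}$ and $(\Re\{\alpha_{A,\ell}\},\Im\{\alpha_{A,\ell}\})$. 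With this, the $(1,1)$ entry of the inverse is simply the reciprocal of the $\theta\theta$ entry, which yields \eqref{eq:bcrb_lemma}.

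\textbf{Step 1 (data FIM).} Conditioned on $\boldsymbol\eta_\ell$ and the transmitted snapshots, the observations are Gaussian with mean $\boldsymbol\mu[q]=\alpha_{A,\ell}\mathbf A(\theta)\mathbf x[q]$ and covariance $\sigma_{rA}^2\mathbf I$ that does not depend on $\boldsymbol\eta_\ell$. The Slepian--Bangs formula then gives
\[
[\mathbf J_D]_{ij}=\frac{2}{\sigma_{rA}^2}\sum_{q=1}^{L_A}\Re\Big\{\frac{\partial\boldsymbol\mu[q]^H}{\partial\eta_i}\frac{\partial\boldsymbol\mu[q]}{\partial\eta_j}\Big\}.
\]
The relevant derivatives are $\partial\boldsymbol\mu/\partial\theta=\alpha\dot{\mathbf A}\mathbf x$, $\partial\boldsymbol\mu/\partial\Re\alpha=\mathbf A\mathbf x$ and $\partial\boldsymbol\mu/\partial\Im\alpha=j\mathbf A\mathbf x$. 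Replacing $\sum_q\mathbf x[q]\mathbf x[q]^H$ by $L_A\mathbf R_x$ (the large-$L_A$ or expectation step), the $\theta\theta$ entry becomes $\frac{2L_A|\alpha|^2}{\sigma_{rA}^2}\Tr(\dot{\mathbf A}^H\dot{\mathbf A}\mathbf R_x)$. The cross entries $\theta$–$\alpha$ are linear in $\alpha$ or $\alpha^*$, for example $\frac{2L_A}{\sigma_{rA}^2}\Re\{\alpha^*\Tr(\mathbf A\mathbf R_x\dot{\mathbf A}^H)\}$.

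\textbf{Step 2 (prior averaging).} I will take the expectation over the prior. Because $\alpha_{A,\ell}$ is independent of $\theta$ and zero mean, every cross entry averages to zero. The $\theta\theta$ entry averages to $\Tr(\mathbf Q_{B,\ell}\mathbf R_x)$, where $\mathbf Q_{B,\ell}$ is as in \eqref{eq:QB_lemma}, since $\mathbb E|\alpha|^2=\xi_{A,\ell}$ factors out. For the prior part, the joint prior factorizes as $p_\ell(\theta)p(\alpha)$. The $\theta$–$\alpha$ entries of $\mathbf J_P$ are products of expected scores, and each expected score vanishes: for $\alpha$ by regularity, and for $\theta$ under the usual boundary condition that $p_\ell$ vanishes at $\theta_{\ell,1},\theta_{\ell,2}$. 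So $\mathbf J_P$ is block diagonal, and its $\theta\theta$ entry is $J_{P,\ell}$ as in \eqref{eq:prior_fisher_lemma}.

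\textbf{Step 3 (inversion).} A block-diagonal BIM gives $[\mathbf J_{B,\ell}^{-1}]_{11}=[\Tr(\mathbf Q_{B,\ell}\mathbf R_x)+J_{P,\ell}]^{-1}$. The $\alpha$ block only has to be invertible; its data part $\frac{2L_A}{\sigma_{rA}^2}\Tr(\mathbf A^H\mathbf A\mathbf R_x)\mathbf I_2$ plus prior information ensures this.

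\textbf{Main obstacle.} The delicate point is justifying that the off-diagonal terms vanish. This relies on replacing the sample covariance by $\mathbf R_x$, so that the data FIM depends on the transmit signal only through $\mathbf R_x$, and on the zero-mean, independent $\alpha_{A,\ell}$. I will also have to state the regularity or boundary assumption on $p_\ell$ that makes the expected prior score zero and makes $J_{P,\ell}$ finite.
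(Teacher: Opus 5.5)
Your proposal is correct and follows essentially the same route as the paper's appendix proof. Both compute the Slepian--Bangs $\theta\theta$ entry $\frac{2|\alpha_{A,\ell}|^2}{\sigma_{rA}^2}\sum_q \mathbf x^H[q]\mathbf Q_A(\theta)\mathbf x[q]$ and average it over the snapshots, $\alpha_{A,\ell}$ and the prior to get $\operatorname{Tr}(\mathbf Q_{B,\ell}\mathbf R_x)$. Both remove the cross terms via $\mathbb E\{\alpha_{A,\ell}\}=0$ and independence, and both invert a block-diagonal Bayesian information matrix. A minor refinement: the cross terms vanish because $\alpha_{A,\ell}$ is zero-mean and independent of $(\theta,\mathbf x)$, not because the sample covariance is replaced by $\mathbf R_x$; likewise, the zero-mean $\alpha$-score alone makes the prior block diagonal, so the boundary condition on $p_\ell$ matters only for the validity of the Van Trees bound itself.
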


\begin{IEEEproof}
The proof is presented in Appendix~\ref{app:bcrb_proof}.
\end{IEEEproof}

It follows from \eqref{eq:bcrb_lemma} that the sensing information
contributed by the transmitted waveform is linear in the covariance
matrix $\mathbf R_x$, while the total Bayesian information is affine
in $\mathbf R_x$ due to the additive prior-information term
$J_{P,\ell}$. Therefore, increasing
$\operatorname{Tr}(\mathbf Q_{B,\ell}\mathbf R_x)$ decreases the
BCRB and improves Alice's Eve-angle estimation accuracy.

\subsection{Impact of Angular Uncertainty on Geometry}

\begin{figure}
    \centering
    \includegraphics[width=0.8\columnwidth]{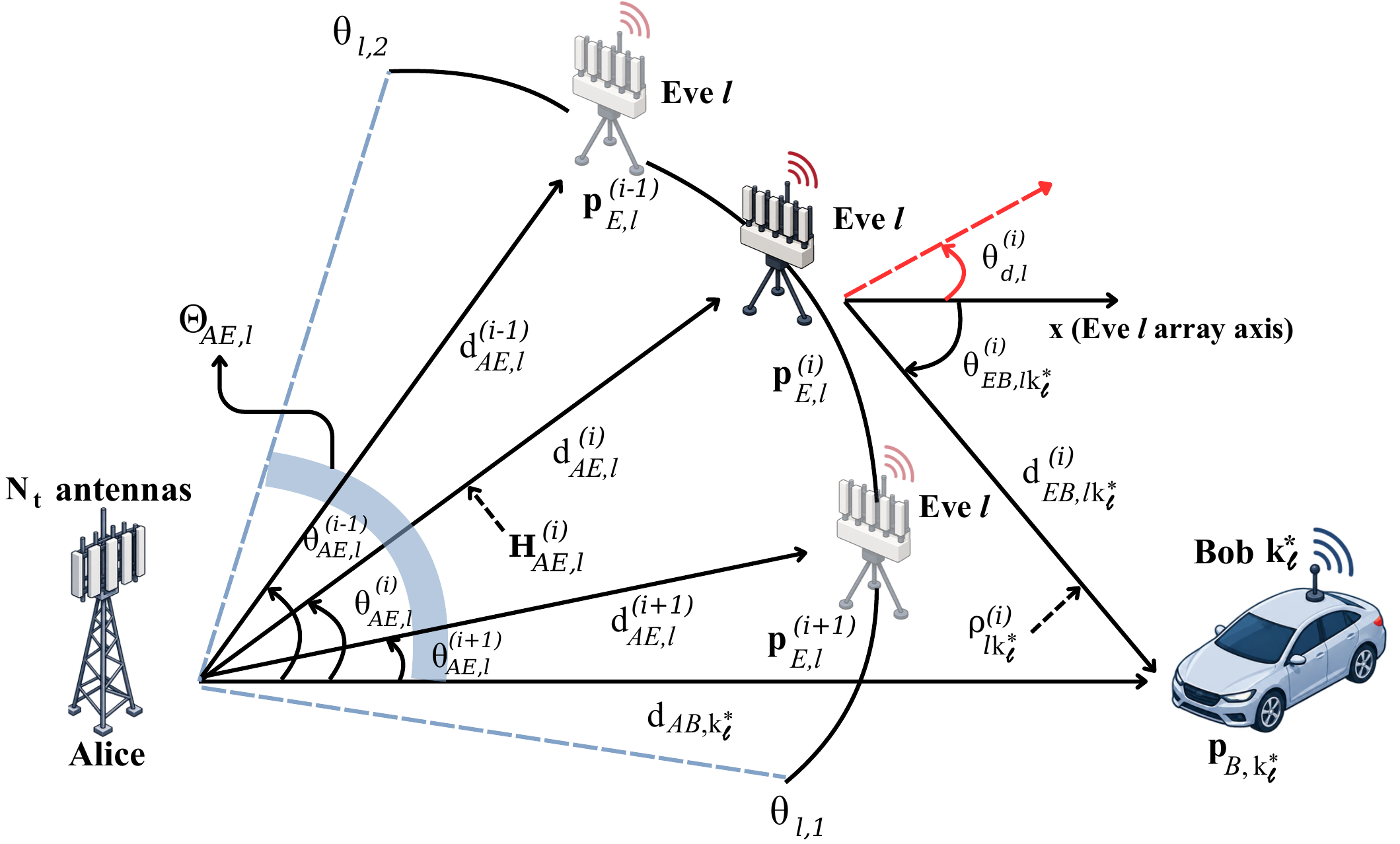}
    \vspace{-2mm}
    \caption{Geometry of Eve-angle uncertainty and deception.}
    \vspace{-4mm}
    \label{fig:uncertainty_geometry}
\end{figure}

The support $[\theta_{\ell,1},\theta_{\ell,2}]$ of the prior PDF $p_\ell(\theta)$ represents the angular directions considered possible for Eve $\ell$ and is therefore adopted as the uncertainty
sector for the robust design, as illustrated in Fig.~\ref{fig:uncertainty_geometry} \cite{Boz_iot}
\begin{equation}
    \Theta_{AE,\ell}
    \triangleq
    \left[
        \theta_{\ell,1},
        \theta_{\ell,2}
    \right].
    \label{eq:eve_uncertainty_sector}
\end{equation}
Alice is assumed to have the prior $p_\ell(\theta)$, the range $d_{AE,\ell}$, and the adopted array/channel model, while the exact Eve direction and receive combiner are unknown.
The continuous sector is discretized using $N_s\geq2$ equally spaced
angular samples:
\begin{equation}
    \theta_{AE,\ell}^{(i)}
    =
    \theta_{\ell,1}
    +
    \frac{i-1}{N_s-1}
    \left(
        \theta_{\ell,2}
        -
        \theta_{\ell,1}
    \right),
    \qquad
    i=1,\ldots,N_s.
    \label{eq:eve_angle_samples}
\end{equation}
The resulting grid includes both endpoints and is used to approximate
the robust constraints over the continuous uncertainty sector.

For each angular sample $\theta_{AE,\ell}^{(i)}$, the corresponding candidate position
of Eve $\ell$ is therefore reconstructed as
\begin{equation}
    \mathbf p_{E,\ell}^{(i)}
    =
    \mathbf p_A
    +
    d_{AE,\ell}
    \begin{bmatrix}
        \cos\theta_{AE,\ell}^{(i)}
        \\
        \sin\theta_{AE,\ell}^{(i)}
    \end{bmatrix}
    =
    \begin{bmatrix}
        x_{E,\ell}^{(i)}
        \\
        y_{E,\ell}^{(i)}
    \end{bmatrix},
    \label{eq:sampled_eve_position}
\end{equation}
where $d_{AE,\ell}$ denotes the Alice--Eve $\ell$ range available at Alice.
The Alice--Eve channel associated with the sampled position $\mathbf p_{E,\ell}^{(i)}$ is denoted by $\mathbf H_{AE,\ell}^{(i)}\in\mathbb C^{N_e\times N_t}$. Under the same geometric realization, the direction from Eve $\ell$ toward Bob $k$ is
\begin{equation}
    \theta_{EB,\ell k}^{(i)}
    =
    \operatorname{atan2}
    \left(
        y_{B,k}-y_{E,\ell}^{(i)},
        x_{B,k}-x_{E,\ell}^{(i)}
    \right).
    \label{eq:sampled_eb_angle}
\end{equation}
The corresponding Bob--Eve distance is $d_{EB,\ell k}^{(i)} =\|\mathbf p_{B,k}-\mathbf p_{E,\ell}^{(i)}\|_2$, yielding the reflected-power coefficient $\rho_{\ell k}^{(i)}$.

To select the Bob for which the virtual-direction response is
enforced, let
\begin{equation}
    \overline{\theta}_{AE,\ell}
    =
    \mathbb E_{p_\ell(\theta)}
    \left\{
        \theta
    \right\}
\end{equation}
denote the prior-mean direction of Eve $\ell$. The reference Eve
position associated with this direction is
\begin{equation}
    \overline{\mathbf p}_{E,\ell}
    =
    \mathbf p_A
    +
    d_{AE,\ell}
    \begin{bmatrix}
        \cos\overline{\theta}_{AE,\ell}
        \\
        \sin\overline{\theta}_{AE,\ell}
    \end{bmatrix}.
\end{equation}
For each Eve, the sensing-deception mechanism is applied to one selected Bob, denoted by $k_\ell^\star$. We adopt the Bob closest to the reference Eve position as a geometry-based selection rule, i.e., 
\begin{equation}
    k_\ell^\star
    =
    \operatorname*{arg\,min}_{k\in\{1,\ldots,K\}}
    \left\|
        \mathbf p_{B,k}
        -
        \overline{\mathbf p}_{E,\ell}
    \right\|_2.
    \label{eq:threat_bob_selection}
\end{equation}
The distance-based criterion is used for user selection only and does not imply that the selected Bob necessarily produces the strongest instantaneous reflected signal. The selection is performed independently for each Eve and does not impose a one-to-one correspondence between Eves and Bobs. Thus, multiple Eves may select the same Bob.

For each angular sample, the virtual direction prescribed for Eve
$\ell$ is
\begin{equation}
    \theta_{d,\ell}^{(i)}
    =
    \operatorname{wrap}_{\pi}
    \left(
        \theta_{EB,\ell k_\ell^\star}^{(i)}
        +
        \Delta_g
    \right),
    \label{eq:sampled_ghost_angle}
\end{equation}
where $\Delta_g$ is a fixed angular offset selected to provide a meaningful separation from the corresponding Bob direction while ensuring that the resulting virtual directions remain within the considered scan region of Eve's ULA. The prescribed deceptive direction does not correspond to the bearing of a physical reflector or scatterer. Rather, it denotes the scan angle at which the covariance design forces Eve’s passive angular response $P_{\ell k}(\vartheta)$ to exhibit a dominant artificial peak. 

Consequently, for each angular sample $\theta_{AE,\ell}^{(i)}$, the potential Eve position $\mathbf p_{E,\ell}^{(i)}$ jointly determines the sampled Alice--Eve channel $\mathbf H_{AE,\ell}^{(i)}$, the Bob--Eve direction $\theta_{EB,\ell k}^{(i)}$, the reflected-power coefficient $\rho_{\ell k}^{(i)}$, and the virtual direction $\theta_{d,\ell}^{(i)}$. Hence, all these quantities reflect the same underlying angular uncertainty and geometric realization.

\section{Problem Formulation and Proposed Solution}
\label{sec:Optsol}

This section presents the optimization design used to jointly secure the downlink communication and mislead Eve's passive sensing process. The objective balances four goals: maximizing the worst-user secrecy margin, strengthening the prescribed ghost response relative to competing scan directions, improving Alice's Eve-angle estimation through the BCRB metric, and limiting the power assigned to the deception signals. To obtain a tractable formulation, the continuous Eve-decoding and passive-scan requirements are reformulated, the nonconcave secrecy-margin term is handled through SCA, and the rank-one constraints are relaxed, yielding a convex semidefinite-constrained subproblem at each iteration.

\subsection{Problem Formulation}
\label{subsec:original_problem}

The SRs in \eqref{eq:bob_secrecy_rate}--\eqref{eq:worst_user_secrecy_rate} depend on the decoding capabilities of all Eves. The robust Eve data-decoding constraint introduced below limits the maximum post-combining data-decoding SINR of every Eve for every information stream to $\Gamma_E$. Consequently,
\begin{equation}
    R_{E,\ell k}
    \leq
    \log_2(1+\Gamma_E),
    \qquad
    \forall\ell,k.
\end{equation}
We therefore define the corresponding worst-user secrecy margin as
\begin{equation}
    \overline{R}_{\rm sec}^{\rm wc}
    =
    \min_{k\in\{1,\ldots,K\}}
    \left\{
        R_{B,k}
        -
        \log_2(1+\Gamma_E)
    \right\}.
    \label{eq:worst_user_secrecy_margin}
\end{equation}
Since the positive-part operator is monotone, the associated conservative lower bound on the physical worst-user SR is
\begin{equation}
    R_{\rm sec}^{\rm wc}
    =
    \left[
        \overline{R}_{\rm sec}^{\rm wc}
    \right]^+.
    \label{eq:original_worst_secrecy}
\end{equation}
To avoid introducing the nonsmooth positive-part operator into the
optimization, the covariance design maximizes the untruncated margin
$\overline{R}_{\rm sec}^{\rm wc}$. When this margin is nonnegative, it
coincides with the conservative physical SR lower bound, otherwise, the latter is zero.

The BCRB derived in
Section~\ref{sec:BayesianUncertainty} lower-bounds the mean-squared
error of Alice's Eve-angle estimator. To combine the sensing term
with the other objectives on a dimensionless scale, we define the
normalized BCRB term as
\begin{equation}
    \mathcal B(\mathbf R_x)
    =
    \frac{1}{L}
    \sum_{\ell=1}^{L}
    \frac{
        \mathrm{BCRB}_{\ell}(\mathbf R_x)
    }{
        \mathrm{BCRB}_{\ell}^{\rm ref}
    },
    \label{eq:normalized_bcrb}
\end{equation}
where
\begin{equation}
    \mathrm{BCRB}_{\ell}^{\rm ref}
    =
    \mathrm{BCRB}_{\ell}
    \left(
        \frac{P_{\max}}{N_t}\mathbf I_{N_t}
    \right)
    \label{eq:bcrb_reference}
\end{equation}
is evaluated under the isotropic full-power covariance
$(P_{\max}/N_t)\mathbf I_{N_t}$ and is used solely for
normalization.

The secrecy-margin and deceptive-peak separation terms may also have
substantially different numerical scales. The positive constants
$R_{\rm sec}^{\rm ref}$ and $\Delta^{\rm ref}$ denote fixed
normalization values for the worst-user secrecy margin and the
deceptive-peak separation, respectively. Their computation is
specified after the tractable reformulations are developed. These
values are computed once and kept fixed throughout the joint
optimization. Similarly,
$\operatorname{Tr}(\mathbf Z)/P_{\max}$ represents the
fraction of the transmit-power budget allocated to the deception
signals. 

To represent the Alice--Eve channel variation over the angular uncertainty sector, let $\mathbf H_{AE,\ell}(\theta)$ denote the channel corresponding to the candidate Eve direction $\theta\in\Theta_{AE,\ell}$ at the assumed Alice--Eve range $d_{AE,\ell}$. Accordingly, the sampled channel matrices satisfy
\begin{equation}
    \mathbf H_{AE,\ell}^{(i)}
    =
    \mathbf H_{AE,\ell}
    \left(
        \theta_{AE,\ell}^{(i)}
    \right).
\end{equation}

Using the passive angular response defined in Section~\ref{subsec:eve_passive_sensing}, the design enforces a dominant response at the prescribed deceptive direction for the selected Bob $k_\ell^\star$. Let $\Theta_E$ denote the visible angular region of the Eve ULA, and let $\epsilon_g>0$ specify the angular neighborhood regarded as part of the same deceptive peak. For the $\ell$-th Eve and geometry sample $i$, define the competing scan-angle region as
\begin{equation}
    \Theta_{\ell}^{g,(i)}
    =
    \Theta_E
    \setminus
    \left(
        \theta_{d,\ell}^{(i)}-\epsilon_g,\,
        \theta_{d,\ell}^{(i)}+\epsilon_g
    \right),
    \label{eq:competing_scan_region}
\end{equation}
where $\setminus$ denotes the set-difference operator.

Therefore, the original robust multiobjective covariance design is formulated as
\begin{equation}
\tag{\textbf{P0}}
\label{eq:original_opt}
\begin{array}{cl}
\displaystyle
\mathop{\max}_{\substack{
\{\mathbf W_k\}_{k=1}^{K},\;
\mathbf Z,\;
\Delta
}}
&
\begin{aligned}
&
\lambda_s
\frac{\overline{R}_{\rm sec}^{\rm wc}}{R_{\rm sec}^{\rm ref}}
+
\lambda_g
\frac{\Delta}{\Delta^{\rm ref}}
\\[-1mm]
&
-
\lambda_b\mathcal B(\mathbf R_x)
-
\lambda_z
\frac{\operatorname{Tr}(\mathbf Z)}{P_{\max}}
\end{aligned}
\\[2ex]
\mathrm{s.t.}
&
\mathrm{C}_1:\;
\operatorname{Tr}(\mathbf R_x)
\leq
P_{\max},
\\[1ex]
&
\mathrm{C}_2:\;
S_{B,k}
\geq
\gamma_{B,\min}I_{B,k},
\qquad
\forall k,
\\[1ex]
&
\mathrm{C}_3:\;
\begin{aligned}[t]
&
\mathbf H_{AE,\ell}(\theta)
\mathbf W_k
\mathbf H_{AE,\ell}^{H}(\theta)
\\[-0.5ex]
&
\preceq
\Gamma_E
\Bigl(
    \mathbf H_{AE,\ell}(\theta)
    \mathbf X_k
    \mathbf H_{AE,\ell}^{H}(\theta)
\\[-0.5ex]
&
\,
+
\sigma_e^2\mathbf I_{N_e}
\Bigr),
\quad
\forall\theta\in\Theta_{AE,\ell},
\,
\forall\ell,k,
\end{aligned}
\\[1ex]
&
\mathrm{C}_4:\;
P_{\ell k_\ell^\star}^{(i)}
\left(
    \theta_{d,\ell}^{(i)}
\right)
\geq
P_{\ell k_\ell^\star}^{(i)}(\vartheta)
+
\Delta,
\\
&\qquad
\forall\vartheta\in\Theta_{\ell}^{g,(i)},
\quad
\forall\ell,i,
\\[1ex]
&
\mathrm{C}_5:\;
\mathbf W_k\succeq\mathbf 0,
\quad
\operatorname{rank}(\mathbf W_k)=1,
\quad
\forall k,
\\[1ex]
&
\mathrm C_6:\;
\mathbf Z\succeq\mathbf 0,\qquad
\mathrm{C}_7:\;
\Delta\geq0.
\end{array}
\end{equation}

The nonnegative weights $ \lambda_s, \lambda_g, \lambda_b, \lambda_z$ control the relative emphasis on secrecy, deceptive-peak separation, Alice’s Eve-direction estimation accuracy, and deception-power consumption, respectively, and satisfy
\begin{equation}
    \lambda_s+\lambda_g+\lambda_b+\lambda_z=1.
    \label{eq:objective_weights}
\end{equation}
Constraint $\mathrm{C}_1$ imposes the total transmit-power budget, while $\mathrm{C}_2$ guarantees the prescribed SINR requirement at every Bob. Constraint $\mathrm{C}_3$ limits the data-decoding capability of every Eve throughout its continuous angular uncertainty sector. Unlike $\mathrm{C}_3$, $\mathrm{C}_4$ is enforced at the sampled Eve geometries. For each such geometry, $\mathrm{C}_4$ requires the response at the prescribed deceptive direction to exceed every competing passive-scan response by at least $\Delta$. Finally, constraints $\mathrm{C}_5$--$\mathrm{C}_7$ impose the beamforming structure, positive semidefiniteness of the deception covariances, and nonnegativity of the deceptive-peak separation, respectively.

Problem~\eqref{eq:original_opt} is nonconvex and semi-infinite. The secrecy-margin term contains nonconcave rate differences, while constraint $\mathrm{C}_3$ must hold for every Alice--Eve direction in a continuous uncertainty sector. Constraint $\mathrm{C}_4$ contains infinitely many inequalities because the competing scan angle $\vartheta$ varies continuously, and the rank-one requirements in $\mathrm{C}_5$ are nonconvex. Hence, the following subsection derives tractable reformulations of these components.

\subsection{Tractable Reformulations}
\label{subsec:tractable_reformulations}

\subsubsection{Secrecy Rate SCA Reformulation}
\label{subsec:sca_rate}

We first derive a concave lower bound for the nonconcave worst-user
secrecy margin in the objective of
problem~\eqref{eq:original_opt}. Using the signal and interference
powers defined in the communication model, the achievable rate of
Bob $k$ can be written as
$R_{B,k}
=
\log_2(S_{B,k}+I_{B,k})
-
\log_2(I_{B,k})$.
Although $S_{B,k}$ and $I_{B,k}$ are affine in the covariance
matrices, this difference of concave logarithmic functions is not
concave in general.

At SCA iteration $r$, let $I_{B,k}^{(r)}$ denote the value of
$I_{B,k}$ evaluated at the current covariance point. Since
$\log_2(I_{B,k})$ is concave, its first-order Taylor expansion at
$I_{B,k}^{(r)}$ is a global upper bound
\begin{equation}
    \log_2(I_{B,k})
    \leq
    \log_2(I_{B,k}^{(r)})
    +
    \frac{
        I_{B,k}-I_{B,k}^{(r)}
    }{
        I_{B,k}^{(r)}\ln 2
    }.
\end{equation}
Substituting this upper bound into the difference expression of
$R_{B,k}$ yields the concave lower bound
\begin{equation}
\begin{aligned}
    R_{B,k}^{\rm lb,(r)}
    &=
    \;
    \log_2(S_{B,k}+I_{B,k})
    -
    \log_2(I_{B,k}^{(r)})
    -
    \frac{
        I_{B,k}-I_{B,k}^{(r)}
    }{
        I_{B,k}^{(r)}\ln 2
    }.
    \label{eq:rate_lower_bound}
\end{aligned}
\end{equation}
To represent the worst-user secrecy margin, we introduce the auxiliary
variable $t_{\rm sec}$ and impose
\begin{equation}
    t_{\rm sec}
    \leq
    R_{B,k}^{\rm lb,(r)}
    -
    \log_2(1+\Gamma_E),
    \qquad
    \forall k.
    \label{eq:sca_secrecy_constraint}
\end{equation}
Since $R_{B,k}^{\rm lb,(r)}$ is concave for a fixed linearization
point, \eqref{eq:sca_secrecy_constraint} defines the hypograph of a
concave function and is therefore convex. Maximizing $t_{\rm sec}$
therefore maximizes a conservative SCA lower bound on the worst-user
secrecy margin. The corresponding physical SR lower bound
is subsequently obtained by applying the positive-part operator to
the achieved worst-user margin, as in
\eqref{eq:original_worst_secrecy}.

\subsubsection{Robust Eve Data-Decoding Reformulation}
\label{subsubsec:robust_eve_reformulation}

We now reformulate the semi-infinite Eve-decoding constraint
$\mathrm{C}_3$ of problem~\eqref{eq:original_opt}. For Eve $\ell$ attempting to decode the information stream intended for Bob $k$, the useful-signal covariance under angular sample $i$ is
\begin{equation}
    \mathbf R_{E,\ell k}^{\rm sig,(i)}
    =
    \mathbf H_{AE,\ell}^{(i)}
    \mathbf W_k
    \mathbf H_{AE,\ell}^{(i)H},
\end{equation}
whereas the corresponding interference-plus-noise covariance is
\begin{equation}
    \mathbf R_{E,\ell k}^{\rm int,(i)}
    =
    \mathbf H_{AE,\ell}^{(i)}
    \mathbf X_k
    \mathbf H_{AE,\ell}^{(i)H}
    +
    \sigma_e^2\mathbf I_{N_e}.
\end{equation}
Since
$\mathbf R_{E,\ell k}^{\rm int,(i)}\succ\mathbf 0$, the maximum
post-combining SINR available to Eve $\ell$ satisfies
\begin{equation}
\begin{aligned}
    &
    \lambda_{\max}
    \left(
        \left(
            \mathbf R_{E,\ell k}^{\rm int,(i)}
        \right)^{-1/2}
        \mathbf R_{E,\ell k}^{\rm sig,(i)}
        \left(
            \mathbf R_{E,\ell k}^{\rm int,(i)}
        \right)^{-1/2}
    \right)
    \leq
    \Gamma_E
    \\
    &\Longleftrightarrow\quad
    \mathbf R_{E,\ell k}^{\rm sig,(i)}
    \preceq
    \Gamma_E
    \mathbf R_{E,\ell k}^{\rm int,(i)}.
\end{aligned}
\end{equation}
For fixed sampled channels, this condition is an LMI in the information and deception covariance matrices.

To account for channel variations between adjacent angular samples, let
\begin{equation}
    h_\ell
    =
    \frac{
        \theta_{\ell,2}-\theta_{\ell,1}
    }{
        N_s-1
    }
\end{equation}
denote the angular grid spacing. The geometry-dependent Alice--Eve channel is assumed to be differentiable over $\Theta_{AE,\ell}$, and let $L_{H,\ell}$ satisfy
\begin{equation}
    L_{H,\ell}
    \geq
    \sup_{\theta\in\Theta_{AE,\ell}}
    \left\|
        \frac{
            \partial\mathbf H_{AE,\ell}(\theta)
        }{
            \partial\theta
        }
    \right\|_2.
    \label{eq:channel_angular_lipschitz}
\end{equation}
Since every angle in the uncertainty sector lies within $h_\ell/2$ of a grid point, the corresponding channel variation satisfies
\begin{equation}
    \left\|
        \mathbf H_{AE,\ell}(\theta)
        -
        \mathbf H_{AE,\ell}^{(i)}
    \right\|_2
    \leq
    \varepsilon_\ell,
    \qquad
    \varepsilon_\ell
    \triangleq
    \frac{L_{H,\ell}h_\ell}{2},
\end{equation}
for the nearest angular sample $i$. Here, $\varepsilon_\ell$ is an
upper bound on the spectral-norm variation of the Alice--Eve channel
between an arbitrary angle and its nearest grid point.

For compactness, we further define
\begin{equation}
    \mathbf A_k
    =
    \mathbf W_k
    -
    \Gamma_E\mathbf X_k.
\end{equation}
For any channel $\mathbf H_{AE,\ell}(\theta) = \mathbf H_{AE,\ell}^{(i)} + \Delta\mathbf H_{\ell i}$ satisfying $|\Delta\mathbf H_{\ell i}|2\leq\varepsilon\ell$, the quadratic-term variation is bounded as
\begin{equation}
\begin{aligned}
    &
    \Big\|
        \mathbf H_{AE,\ell}(\theta)
        \mathbf A_k
        \mathbf H_{AE,\ell}^{H}(\theta)
        -
        \mathbf H_{AE,\ell}^{(i)}
        \mathbf A_k
        \mathbf H_{AE,\ell}^{(i)H}
    \Big\|_2
    \\
    &\leq
    \left(
        2\varepsilon_\ell
        \left\|
            \mathbf H_{AE,\ell}^{(i)}
        \right\|_2
        +
        \varepsilon_\ell^2
    \right)
    \left\|
        \mathbf A_k
    \right\|_2.
\end{aligned}
\label{eq:eve_quadratic_variation}
\end{equation}
Moreover, since
\begin{equation}
    \operatorname{Tr}(\mathbf W_k)
    +
    \operatorname{Tr}(\mathbf X_k)
    =
    \operatorname{Tr}(\mathbf R_x)
    \leq
    P_{\max},
\end{equation}
we have
\begin{equation}
\begin{aligned}
    \left\|\mathbf A_k\right\|_2
    &\leq
    \left\|\mathbf W_k\right\|_2
    +
    \Gamma_E
    \left\|\mathbf X_k\right\|_2
    \leq
    \max\{1,\Gamma_E\}P_{\max}.
\end{aligned}
\end{equation}
Combining \eqref{eq:eve_quadratic_variation} with the power
bound on $\|\mathbf A_k\|_2$, the variation of the quadratic matrix
term is upper-bounded by the fixed scalar
\begin{equation}
    \delta_{\ell i}^{E}
    =
    \left(
        2\varepsilon_\ell
        \left\|
            \mathbf H_{AE,\ell}^{(i)}
        \right\|_2
        +
        \varepsilon_\ell^2
    \right)
    \max\{1,\Gamma_E\}P_{\max}.
    \label{eq:eve_intersample_margin}
\end{equation}
A sufficient condition for limiting Eve's decoding SINR by
$\Gamma_E$ throughout the continuous uncertainty sector is therefore
\begin{equation}
\begin{aligned}
    &
    \mathbf H_{AE,\ell}^{(i)}
    \mathbf W_k
    \mathbf H_{AE,\ell}^{(i)H}
    +
    \delta_{\ell i}^{E}\mathbf I_{N_e}
    \\
    &\preceq
    \Gamma_E
    \left(
        \mathbf H_{AE,\ell}^{(i)}
        \mathbf X_k
        \mathbf H_{AE,\ell}^{(i)H}
        +
        \sigma_e^2\mathbf I_{N_e}
    \right),
    \qquad
    \forall \ell,k,i.
\end{aligned}
\label{eq:robust_eve_decoding_lmi}
\end{equation}
The margin $\delta_{\ell i}^{E}$ accounts for the maximum channel
variation between angular samples. Since it depends only on the
sampled channels, the uncertainty-sector geometry, and fixed system
parameters, \eqref{eq:robust_eve_decoding_lmi} remains an LMI in
$\{\mathbf W_k\}_{k=1}^{K}$ and
$\mathbf Z$ and provides a finite sufficient
replacement for constraint $\mathrm{C}_3$.

\subsubsection{Continuous Ghost-Dominance Reformulation}
\label{subsec:ghost_reformulation}

We next derive an exact finite-dimensional reformulation of the
continuous scan-angle requirement in constraint $\mathrm C_4$ for each sampled
Eve geometry. Depending on the location of
$\theta_{d,\ell}^{(i)}$ within $\Theta_E$, the competing scan-angle
region defined in \eqref{eq:competing_scan_region} consists of one or
two nonempty closed intervals and can be expressed as
\begin{equation}
    \Theta_{\ell}^{g,(i)}
    =
    \bigcup_{s\in\mathcal S_{\ell i}}
    \Theta_{\ell i,s}^{g},
    \qquad
    \Theta_{\ell i,s}^{g}
    =
    \left[
        \vartheta_{\ell i,s}^{-},
        \vartheta_{\ell i,s}^{+}
    \right],
    \label{eq:competing_scan_intervals}
\end{equation}
where $\bigcup$ denotes the union of indexed intervals, $\mathcal S_{\ell i}\subseteq{-,+}$ indexes nonempty intervals, and $\vartheta_{\ell i,s}^{-}$ and $\vartheta_{\ell i,s}^{+}$ are the lower and upper endpoints of interval $s$, respectively.

Constraint $\mathrm C_4$ requires the passive response at the
prescribed deceptive direction $\theta_{d,\ell}^{(i)}$ to exceed
every competing response over $\Theta_{\ell}^{g,(i)}$ by at least
$\Delta$, for each Eve $\ell$ and geometry sample $i$. For any fixed
scan angle $\vartheta$, this requirement is affine in the information
and deception covariance matrices. However, it is semi-infinite
because $\vartheta$ varies continuously over
$\Theta_{\ell}^{g,(i)}$.

For Eve's half-wavelength ULA, define the spatial frequency
$u=\pi\sin\vartheta$ and, using the steering vector in \eqref{ae},
the corresponding representation as
$\widetilde{\mathbf a}_E(u)\triangleq
\mathbf a_E(\arcsin(u/\pi))$.

Using the passive-response model in
Section~\ref{subsec:eve_passive_sensing}, the two angle-dependent
quadratic terms associated with Bob $k_\ell^\star$ can be collected
into the effective covariance
\begin{equation}
\begin{aligned}
    \mathbf C_{\ell}^{(i)}
    &=
    \;
    \rho_{\ell k_\ell^\star}^{(i)}
    I_{k_\ell^\star}(\mathbf R_x)
    \mathbf a_E
    \left(
        \theta_{EB,\ell k_\ell^\star}^{(i)}
    \right)
    \mathbf a_E^H
    \left(
        \theta_{EB,\ell k_\ell^\star}^{(i)}
    \right)
    \\
    &+
    \mathbf H_{AE,\ell}^{(i)}
    \mathbf R_x
    \mathbf H_{AE,\ell}^{(i)H}.
\end{aligned}
\label{eq:effective_covariance}
\end{equation}
Ignoring the angle-independent noise, the passive response is $\widetilde{\mathbf a}_E^H(u)\mathbf C_{\ell}^{(i)} \widetilde{\mathbf a}_E(u)$.

Let now
$u_{d,\ell}^{(i)}
=
\pi\sin(\theta_{d,\ell}^{(i)})$
denote the spatial frequency associated with the prescribed deceptive
direction. The response-difference function is defined as
\begin{equation}
\begin{aligned}
    q_{\ell i}(u)
    &=
    \;
    \widetilde{\mathbf a}_E^H
    \left(
        u_{d,\ell}^{(i)}
    \right)
    \mathbf C_{\ell}^{(i)}
    \widetilde{\mathbf a}_E
    \left(
        u_{d,\ell}^{(i)}
    \right)
    -
    \widetilde{\mathbf a}_E^H(u)
    \mathbf C_{\ell}^{(i)}
    \widetilde{\mathbf a}_E(u)
    -
    \Delta.
\end{aligned}
\label{eq:response_difference_trig_poly}
\end{equation}
The passive-noise term cancels because it is identical at the
prescribed and competing scan angles.

The corresponding spatial-frequency interval is
\begin{equation}
    \mathcal U_{\ell i,s}
    =
    \left[
        u_{\ell i,s}^{-},
        u_{\ell i,s}^{+}
    \right],
    \qquad
    u_{\ell i,s}^{\pm}
    =
    \pi\sin
    \left(
        \vartheta_{\ell i,s}^{\pm}
    \right).
    \label{eq:spatial_frequency_intervals}
\end{equation}
Since $\Theta_E\subseteq[-\pi/2,\pi/2]$, the transformation is
monotonic. Hence, constraint $\mathrm C_4$ is equivalent to
\begin{equation}
    q_{\ell i}(u)\geq0,
    \qquad
    \forall u\in\mathcal U_{\ell i,s},
    \quad
    \forall s\in\mathcal S_{\ell i},
    \quad
    \forall\ell,i.
    \label{eq:trig_polynomial_nonnegativity}
\end{equation}

For each spatial-frequency interval, we define
\begin{equation}
\begin{gathered}
    \bar{u}_{\ell i,s}
    =
    \frac{
        u_{\ell i,s}^{-}
        +
        u_{\ell i,s}^{+}
    }{2},
    \qquad
    \tau
    =
    \tan
    \left(
        \frac{
            u-\bar{u}_{\ell i,s}
        }{2}
    \right),
    \\[1mm]
    \tau_{\ell i,s}^{\pm}
    =
    \tan
    \left(
        \frac{
            u_{\ell i,s}^{\pm}
            -
            \bar{u}_{\ell i,s}
        }{2}
    \right).
\end{gathered}
\label{eq:tangent_half_angle}
\end{equation}
Because each spatial-frequency interval has width strictly smaller than $2\pi$, the shifted tangent half-angle transformation is finite and bijective over that interval.

We further define
\begin{equation}
    p_{\ell i,s}(\tau)
    =
    (1+\tau^2)^{N_e-1}
    q_{\ell i}
    \left(
        \bar{u}_{\ell i,s}
        +
        2\arctan\tau
    \right).
    \label{eq:ordinary_ghost_polynomial}
\end{equation}
Then, $p_{\ell i,s}(\tau)$ is a real polynomial of degree at most
$2(N_e-1)$ whose coefficients are affine in the covariance variables
and $\Delta$.

\begin{proposition}
\label{prop:continuous_ghost_reformulation}
For each Eve $\ell$, geometry sample $i$, and interval
$s\in\mathcal S_{\ell i}$, the nonnegativity condition in
\eqref{eq:trig_polynomial_nonnegativity} holds if and only if there
exist real symmetric positive-semidefinite matrices
$\mathbf Q_{\ell i,s}^{(0)}\in\mathbb S_{+}^{N_e}$ and
$\mathbf Q_{\ell i,s}^{(1)}\in\mathbb S_{+}^{N_e-1}$ such that
\begin{equation}
\begin{aligned}
    p_{\ell i,s}(\tau)
    &=
    \;
    \mathbf v_{N_e-1}^{T}(\tau)
    \mathbf Q_{\ell i,s}^{(0)}
    \mathbf v_{N_e-1}(\tau)
    \\
    &+
    \left(
        \tau-\tau_{\ell i,s}^{-}
    \right)
    \left(
        \tau_{\ell i,s}^{+}-\tau
    \right)
    \mathbf v_{N_e-2}^{T}(\tau)
    \mathbf Q_{\ell i,s}^{(1)}
    \mathbf v_{N_e-2}(\tau),
\end{aligned}
\label{eq:interval_sos_identity}
\end{equation}
where
$\mathbf v_d(\tau)
=
[1,\tau,\ldots,\tau^d]^T$.
Equality in \eqref{eq:interval_sos_identity} denotes a polynomial
identity in $\tau$. Specifically, after expanding both sides, the
coefficients of each monomial $\tau^m$, for
$m=0,\ldots,2(N_e-1)$, are equated.
\end{proposition}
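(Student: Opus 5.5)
The argument has two parts. First, the trigonometric nonnegativity condition on $\mathcal U_{\ell i,s}$ is equivalent to nonnegativity of the ordinary polynomial $p_{\ell i,s}$ on $[\tau_{\ell i,s}^-,\tau_{\ell i,s}^+]$. Second, a Markov--Lukács-type representation theorem for polynomials nonnegative on a compact interval gives the certificate in \eqref{eq:interval_sos_identity}.

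\emph{Step 1: $q_{\ell i}$ as a trigonometric polynomial and the substitution.} Write $\mathbf C=\mathbf C_\ell^{(i)}$, which is Hermitian. Since $[\widetilde{\mathbf a}_E(u)]_n=e^{jnu}/\sqrt{N_e}$, we have
\[
\widetilde{\mathbf a}_E^H(u)\mathbf C\widetilde{\mathbf a}_E(u)=\tfrac{1}{N_e}\sum_{m=-(N_e-1)}^{N_e-1} c_m e^{jmu},
\]
where $c_m$ is the sum of the $m$-th diagonal of $\mathbf C$ and $c_{-m}=c_m^*$. This is a real trigonometric polynomial of degree at most $N_e-1$, with coefficients linear in $\mathbf R_x$. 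It is linear also through $I_{k_\ell^\star}(\mathbf R_x)$, which is linear in $\mathbf R_x$. Hence $q_{\ell i}(u)$ is a real trigonometric polynomial of degree at most $N_e-1$ whose coefficients are affine in $(\{\mathbf W_k\},\mathbf Z,\Delta)$.

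Now shift $u=\bar u+\phi$ and set $\tau=\tan(\phi/2)$. Then $\cos\phi=(1-\tau^2)/(1+\tau^2)$ and $\sin\phi=2\tau/(1+\tau^2)$. Each $\cos(m\phi)$ and $\sin(m\phi)$ with $m\le N_e-1$, multiplied by $(1+\tau^2)^{N_e-1}$, is a real polynomial of degree at most $2(N_e-1)$; the same holds for $\cos\phi^m$ and $\sin\phi^m$ expanded as products. This confirms the stated form of $p_{\ell i,s}$.

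The width of $\mathcal U_{\ell i,s}$ is at most $2\pi$, since $u$ ranges within $[-\pi,\pi]$. More precisely, the interval deletion around $u_d$ keeps $|u-\bar u|<\pi$ on the interval. So the map $u\mapsto\tau$ is a continuous increasing bijection of $\mathcal U_{\ell i,s}$ onto $[\tau^-,\tau^+]$, and $(1+\tau^2)^{N_e-1}>0$. It follows that $q_{\ell i}\ge0$ on $\mathcal U_{\ell i,s}$ if and only if $p_{\ell i,s}(\tau)\ge0$ for all $\tau\in[\tau^-,\tau^+]$.

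\emph{Step 2: interval SOS representation.} The "if" direction is immediate. For $\tau\in[\tau^-,\tau^+]$, the factor $(\tau-\tau^-)(\tau^+-\tau)$ is nonnegative, and each Gram form $\mathbf v^T\mathbf Q\mathbf v$ with $\mathbf Q\succeq\mathbf 0$ is nonnegative.

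For the "only if" direction, apply the Markov--Lukács theorem: a real polynomial $p$ of degree at most $2d$, with $d=N_e-1$, that is nonnegative on $[a,b]$ can be written as
\[
p=f^2+(\tau-a)(b-\tau)g^2,\qquad \deg f\le d,\ \deg g\le d-1,
\]
or more generally as such a sum of squares. The statement only requires degree at most $2d$, not exact degree $2d$. The even-degree version of Lukács handles this case, or one can treat a lower degree as degree $2d$ with vanishing leading coefficients and note that the decomposition then has appropriately bounded degrees. Writing $f^2=\mathbf v_d^T\mathbf f\mathbf f^T\mathbf v_d$ and $g^2=\mathbf v_{d-1}^T\mathbf g\mathbf g^T\mathbf v_{d-1}$ produces real PSD Gram matrices $\mathbf Q^{(0)}=\mathbf f\mathbf f^T\in\mathbb S_+^{N_e}$ and $\mathbf Q^{(1)}=\mathbf g\mathbf g^T\in\mathbb S_+^{N_e-1}$. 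Equating monomial coefficients then gives the stated linear identity.

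The main obstacle is ensuring the degree bounds in the Lukács decomposition when $\deg p<2d$, including odd degree. If $\deg p=2d-1$, the odd-degree form $(\tau-a)f^2+(b-\tau)g^2$ must be converted into the stated even form. I would handle this via the Gram-matrix / moment-duality proof: the cone $\{p:\deg p\le 2d,\ p\ge0 \text{ on }[a,b]\}$ is closed, and its dual consists of truncated moment sequences on $[a,b]$. These are characterized by PSD Hankel conditions with localizing matrix $(\tau-a)(b-\tau)$. This is the classical solution of the truncated Hausdorff moment problem for even degree $2d$, and it shows that the cone generated by the quadratic-module elements of the stated degrees equals the nonnegativity cone. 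Closedness of the quadratic-module cone holds because $[a,b]$ has nonempty interior.
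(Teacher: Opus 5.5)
Your proposal is correct and follows essentially the same route as the paper's proof. Both expand $\widetilde{\mathbf a}_E^H(u)\mathbf C_{\ell}^{(i)}\widetilde{\mathbf a}_E(u)$ as a Fourier series, apply the shifted tangent half-angle substitution so that $(1+\tau^2)^{N_e-1}$ clears the denominators and gives a monotone bijection onto $[\tau_{\ell i,s}^{-},\tau_{\ell i,s}^{+}]$, and then use the Markov--Luk\'acs theorem encoded through Gram matrices. Your attention to the case $\deg p_{\ell i,s}<2(N_e-1)$ is more careful than the paper, which simply quotes the theorem in ``degree at most $2d$'' form. The odd-degree case does not need moment duality, though: the identities $(b-a)(\tau-a)=(\tau-a)^2+(\tau-a)(b-\tau)$ and $(b-a)(b-\tau)=(b-\tau)^2+(\tau-a)(b-\tau)$ convert $(\tau-a)f^2+(b-\tau)g^2$ directly into the stated form with the required degree bounds.
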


\begin{IEEEproof}
The proof is provided in
Appendix~\ref{app:continuous_ghost_reformulation}.
\end{IEEEproof}

Consequently, coefficient matching in
\eqref{eq:interval_sos_identity}, together with the positive-semidefinite constraints on the Gram matrices, replaces $\mathrm C_4$ with finitely many affine equalities and LMIs for each sampled Eve geometry.

\subsubsection{BCRB Convexity and Semidefinite Relaxation}
\label{subsubsec:bcrb_sdr}

From \eqref{eq:bcrb_lemma}, each
$\mathrm{BCRB}_{\ell}(\mathbf R_x)$ is the reciprocal of a positive
affine function of $\mathbf R_x$ and is therefore convex over its
domain. Since the normalization constants in
\eqref{eq:normalized_bcrb} are strictly positive,
$\mathcal B(\mathbf R_x)$ is also convex. Consequently,
$-\lambda_b\mathcal B(\mathbf R_x)$ is concave.

After the preceding reformulations, the rank-one constraints
$\mathrm C_5$ remain nonconvex. Dropping these constraints yields a
semidefinite relaxation of the covariance-design problem.

\subsection{Final Reformulated Convex SCA Subproblem}
\label{subsec:proposed_sca_subproblem}

For notational compactness, collect the optimization variables as
\begin{equation*}
\mathcal V
\triangleq
\left\{
    \{\mathbf W_k\}_{k=1}^{K},
    \mathbf Z,
    t_{\rm sec},
    \Delta,
    \{\mathbf Q_{\ell i,s}^{(0)},
      \mathbf Q_{\ell i,s}^{(1)}\}_{\ell,i,s}
\right\}.
\end{equation*}
where the Gram matrices are included for every geometry sample $i$
and every $s\in\mathcal S_{\ell i}$.

At SCA iteration $r$, the preceding reformulations and semidefinite
relaxation yield the following convex semidefinite-constrained
subproblem
\begingroup
\small
\setlength{\arraycolsep}{2pt}
\renewcommand{\arraystretch}{1.05}

\begin{align}
\underset{\mathcal V}{\operatorname{max}}
\quad
&
\begin{aligned}[t]
&
\lambda_s
\frac{t_{\rm sec}}{R_{\rm sec}^{\rm ref}}
+
\lambda_g
\frac{\Delta}{\Delta^{\rm ref}}
-
\lambda_b
\mathcal B(\mathbf R_x)
-
\lambda_z
\frac{
    \operatorname{Tr}(\mathbf Z)
}{
    P_{\max}
}
\\
&
-
\mu
\mathcal P^{(r)}
\bigl(
    \{\mathbf W_k\},
    \mathbf Z
\bigr)
\end{aligned}
\notag
\\[1ex]
\mathrm{s.t.}
\quad
&
\widetilde{\mathrm C}_1:\;
\operatorname{Tr}(\mathbf R_x)
\leq
P_{\max},
\notag
\\
&
\widetilde{\mathrm C}_2:\;
S_{B,k}
\geq
\gamma_{B,\min}I_{B,k},
\quad
\forall k,
\notag
\\
&
\widetilde{\mathrm C}_3:\;
t_{\rm sec}
\leq
R_{B,k}^{\rm lb,(r)}
-
\log_2(1+\Gamma_E),
\quad
\forall k,
\notag
\\
&
\widetilde{\mathrm C}_4:\;
\begin{aligned}[t]
&
\mathbf H_{AE,\ell}^{(i)}
\mathbf W_k
\mathbf H_{AE,\ell}^{(i)H}
+
\delta_{\ell i}^{E}\mathbf I_{N_e}
\\[-0.5ex]
&
\preceq
\Gamma_E
\Bigl(
    \mathbf H_{AE,\ell}^{(i)}
    \mathbf X_k
    \mathbf H_{AE,\ell}^{(i)H}
+
\sigma_e^2\mathbf I_{N_e}
\Bigr),
\quad
\forall\ell,k,i,
\end{aligned}
\notag
\\
&
\widetilde{\mathrm C}_5:\;
\begin{aligned}[t]
p_{\ell i,s}(\tau)
&=
\mathbf v_{N_e-1}^{T}(\tau)
\mathbf Q_{\ell i,s}^{(0)}
\mathbf v_{N_e-1}(\tau)
\\[-0.5ex]
&+
\left(
    \tau-\tau_{\ell i,s}^{-}
\right)
\left(
    \tau_{\ell i,s}^{+}-\tau
\right)
\\[-0.5ex]
&\times
\mathbf v_{N_e-2}^{T}(\tau)
\mathbf Q_{\ell i,s}^{(1)}
\mathbf v_{N_e-2}(\tau),
\\[-0.5ex]
&
\forall\ell,i,\;
s\in\mathcal S_{\ell i},
\end{aligned}
\notag
\\
&
\widetilde{\mathrm C}_6:\;
\begin{aligned}[t]
&
\mathbf Q_{\ell i,s}^{(0)}
\succeq
\mathbf 0,
\quad
\mathbf Q_{\ell i,s}^{(1)}
\succeq
\mathbf 0,
\quad
\forall\ell,i,\;
s\in\mathcal S_{\ell i},
\end{aligned}
\notag
\\
&
\widetilde{\mathrm C}_7:\;
\mathbf W_k
\succeq
\mathbf 0,
\quad
\forall k, \qquad
\notag
\widetilde{\mathrm C}_8:\;
\mathbf Z\succeq\mathbf 0.
\notag
\\
&
\widetilde{\mathrm C}_9:\;
\Delta
\geq
0.
\tag{\ensuremath{\mathbf{P1}^{(r)}}}
\label{eq:final_opt}
\end{align}

\endgroup

In the objective of problem~\eqref{eq:final_opt},
$\mathcal P^{(r)}$ is a proximal regularization function that
discourages large deviations from the current covariance point. It
is defined as
\begin{equation}
    \mathcal P^{(r)}
    \bigl(
        \{\mathbf W_k\},
        \mathbf Z
    \bigr)
    \triangleq
    \sum_{k=1}^{K}
    \left\|
        \mathbf W_k-\mathbf W_k^{(r)}
    \right\|_F^2
    +
    \left\|
        \mathbf Z-\mathbf Z^{(r)}
    \right\|_F^2.
\end{equation}
Here, $\|\cdot\|_F$ denotes the Frobenius norm, and $\mu>0$ is the
weight of the proximal penalty in the objective of
problem~\eqref{eq:final_opt}. Since
$\mathcal P^{(r)}$ and its gradient vanish at the current SCA point,
the proximal term preserves the local value and first-order
consistency of the surrogate while improving the numerical stability
of the iterations.

The fixed normalization scales are computed once from two auxiliary
single-objective instances of the same relaxed SCA procedure. Let
$t_{\rm sec}^{\star,s}$ and $\Delta^{\star,g}$ denote the converged
values obtained by optimizing only the worst-user secrecy margin and
the deceptive-peak separation, respectively. The strictly positive
scales are selected as
\begin{equation}
    R_{\rm sec}^{\rm ref}
    =
    \max
    \left\{
        \left|t_{\rm sec}^{\star,s}\right|,
        \epsilon_{\rm ref}
    \right\},
    \qquad
    \Delta^{\rm ref}
    =
    \max
    \left\{
        \Delta^{\star,g},
        \epsilon_{\rm ref}
    \right\},
    \label{eq:objective_reference_values}
\end{equation}
where $\epsilon_{\rm ref}>0$ is a small fixed scaling floor. These
reference values are then kept fixed in the joint multiobjective
design.

For a given SCA point
$\{\mathbf W_k^{(r)}\}_{k=1}^{K}$ and $\mathbf Z^{(r)}$,
problem~\eqref{eq:final_opt} is convex. Its objective is concave
because the terms involving $t_{\rm sec}$, $\Delta$, and
$\operatorname{Tr}(\mathbf Z)$ are affine,
$-\lambda_b\mathcal B(\mathbf R_x)$ is concave, and the negative
proximal term is concave. Constraint
$\widetilde{\mathrm C}_3$ defines the hypograph of the concave rate
lower bound, while $\widetilde{\mathrm C}_4$ is an LMI. Equality in
$\widetilde{\mathrm C}_5$ is imposed coefficientwise in $\tau$ and
therefore represents finitely many affine equalities. Together with
$\widetilde{\mathrm C}_6$, it enforces the continuous
ghost-dominance requirement for every sampled Eve geometry. The
remaining constraints are affine or positive-semidefinite
constraints.

After convergence, if $\mathbf W_k$ is rank one, the corresponding
beamforming vector is recovered exactly as
\begin{equation}
    \mathbf w_k
    =
    \sqrt{\lambda_{\max}(\mathbf W_k)}
    \mathbf v_{\max}(\mathbf W_k),
    \label{eq:rank_one_recovery}
\end{equation}
where $\mathbf v_{\max}(\mathbf W_k)$ is a corresponding unit-norm principal eigenvector. 
For a higher-rank $\mathbf W_k$, we use the rank-one recovery
\begin{equation}
\mathbf w_k=
\frac{\mathbf W_k\mathbf h_{AB,k}}
{\sqrt{\mathbf h_{AB,k}^{H}\mathbf W_k\mathbf h_{AB,k}}},
\end{equation}
and transfer the residual covariance $\mathbf W_k-\mathbf w_k\mathbf w_k^H\succeq\mathbf 0$ to the deception covariance $\mathbf Z$. This preserves the total transmit covariance $\mathbf R_x$ and the desired received signal power at each Bob.

\subsection{SCA-based Solution Procedure}
\label{subsec:sca_solution}

Since the rate lower bounds in $\widetilde{\mathrm C}_3$ depend on the current covariance point,
problem~\eqref{eq:final_opt} is solved successively. The procedure starts from covariance matrices
$\{\mathbf W_k^{(0)}\}_{k=1}^{K}$ and $\mathbf Z^{(0)}$ for which the first
reformulated subproblem is feasible. At iteration $r$, the
interference powers $\{I_{B,k}^{(r)}\}_{k=1}^{K}$ are evaluated at
the current covariance point and substituted into
\eqref{eq:rate_lower_bound} to construct the concave lower bounds
$\{R_{B,k}^{\rm lb,(r)}\}_{k=1}^{K}$.

The complete variable set $\mathcal V$ is optimized when solving
problem~\eqref{eq:final_opt}. Let
$\mathcal V^{\star,(r)}$ denote its optimizer, and let
$\{\mathbf W_k^{\star,(r)}\}_{k=1}^{K}$ and
$\mathbf Z^{\star,(r)}$ denote the corresponding
covariance components. These components define the next SCA point as
\begin{equation}
    \mathbf W_k^{(r+1)}
    =
    \mathbf W_k^{\star,(r)},
    \quad \forall k,
    \qquad
    \mathbf Z^{(r+1)}
    =
    \mathbf Z^{\star,(r)}.
    \label{eq:sca_covariance_updates}
\end{equation}
The remaining variables in $\mathcal V$, including $t_{\rm sec}$,
$\Delta$, and the sum-of-squares (SOS) Gram matrices, are reoptimized at every
iteration and do not define the subsequent rate approximation or
proximal center. The procedure terminates when the relative change in
the objective of the relaxed reformulation, evaluated without the
proximal penalty, falls below a prescribed tolerance or when the
maximum number of iterations is reached. The complete procedure is summarized in Algorithm~\ref{alg:sca}.

\begin{algorithm}[t]
\caption{SCA-based Robust Covariance Design}
\label{alg:sca}
\begin{algorithmic}[1]

\Require Feasible initial covariances
$\{\mathbf W_k^{(0)}\}_{k=1}^{K}$ and $\mathbf Z^{(0)}$

\Require Reference values
$R_{\rm sec}^{\rm ref}$ and $\Delta^{\rm ref}$

\Require Weights
$\lambda_s$, $\lambda_g$, $\lambda_b$, and $\lambda_z$

\Require Proximal weight $\mu>0$

\Require Tolerance $\epsilon_{\rm SCA}>0$

\Require Maximum number of iterations $r_{\max}$

\State Set $r=0$

\Repeat

    \State Evaluate
    $\{I_{B,k}^{(r)}\}_{k=1}^{K}$
    at the current covariance point

    \State Construct
    $\{R_{B,k}^{\rm lb,(r)}\}_{k=1}^{K}$
    using \eqref{eq:rate_lower_bound}

    \State Solve problem~\eqref{eq:final_opt} to obtain
    $\mathcal V^{\star,(r)}$

    \State Extract
$\{\mathbf W_k^{\star,(r)}\}_{k=1}^{K}$ and
$\mathbf Z^{\star,(r)}$

\State Set
$\mathbf W_k^{(r+1)}
\leftarrow
\mathbf W_k^{\star,(r)}$,
$\forall k$

\State Set
$\mathbf Z^{(r+1)}
\leftarrow
\mathbf Z^{\star,(r)}$

    \State Evaluate the unregularized objective at
    $\mathcal V^{\star,(r)}$

    \State Set $r\leftarrow r+1$

\Until{the relative objective change is below
$\epsilon_{\rm SCA}$ or $r=r_{\max}$}

\Ensure Final covariance matrices
$\{\mathbf W_k^{(r)}\}_{k=1}^{K}$ and $\mathbf Z^{(r)}$

\end{algorithmic}
\end{algorithm}

The approximation $R_{B,k}^{\rm lb,(r)}$ is a global concave lower
bound on $R_{B,k}$ that is value-tight and gradient-consistent at the
current covariance point. Moreover, the proximal penalty and its
gradient vanish at that point. Therefore,
problem~\eqref{eq:final_opt} satisfies the local tightness and
first-order consistency properties required by the proximal SCA
framework. 

Assuming a feasible initialization, exact solution of each convex
subproblem, and the standard continuity and regularity conditions of
SCA, the objective values of the relaxed reformulation, evaluated
without the proximal term, form a nondecreasing sequence. The
transmit-power constraint bounds the covariance feasible set and,
consequently, the attainable objective values. Hence, the objective
sequence converges, and every accumulation point of the covariance
sequence is a stationary point of the semidefinite-relaxed
reformulation
\cite{razaviyayn2013,scutari2014}. After convergence, the information
beamformers are obtained through the rank-recovery procedure described
in Section~\ref{subsec:proposed_sca_subproblem}.

\subsection{Complexity Analysis}
\label{subsec:complexity}

The computational cost of Algorithm~\ref{alg:sca} is dominated by
the solution of the convex semidefinite-constrained
problem~\eqref{eq:final_opt} at each SCA iteration. Since
problem~\eqref{eq:final_opt} also contains logarithmic rate terms,
reciprocal-affine BCRB terms, and the quadratic proximal penalty, it
is not a pure SDP. Their precise treatment depends on the conic
reformulation and the adopted solver. We therefore characterize the
dominant complexity associated with the semidefinite cone structure,
which captures the main matrix operations of the proposed design.

The semidefinite part contains $K+1$ transmit-covariance PSD blocks
of size $N_t$, corresponding to
$\{\mathbf W_k\}_{k=1}^{K}$ and $\mathbf Z$, together with
$LKN_s$ robust Eve-decoding LMIs of size $N_e$. In addition, each
competing angular interval introduces two SOS Gram-matrix blocks of
sizes $N_e$ and $N_e-1$. Let
\begin{equation}
    N_I
    =
    \sum_{\ell=1}^{L}
    \sum_{i=1}^{N_s}
    |\mathcal S_{\ell i}|
    \leq
    2LN_s
\end{equation}
denote the total number of nonempty competing intervals.

Accordingly, the aggregate dimension of the semidefinite blocks is
\begin{equation}
\begin{aligned}
    \nu_{\rm PSD}
    &=
    (K+1)N_t
    +
    LKN_sN_e
    \\
    &+
    N_I(2N_e-1),
\end{aligned}
\label{eq:psd_barrier_dimension}
\end{equation}
while the dominant blockwise matrix-factorization cost of one
interior-point iteration scales as
\begin{equation}
\begin{aligned}
    \mathcal C_{\rm PSD}
    =
    \mathcal O\Big(
        &(K+1)N_t^3
        +
        LKN_sN_e^3
        \\
        &+
        N_I
        \big[
            N_e^3+(N_e-1)^3
        \big]
    \Big).
\end{aligned}
\label{eq:psd_iteration_complexity}
\end{equation}
The SOS coefficient-matching equalities are sparse, although the
exact Newton-system cost depends on how this sparsity and the
non-semidefinite convex terms are handled by the solver.

For a primal-dual interior-point implementation, the number of
Newton iterations associated with the semidefinite cones scales as
$\mathcal O(\sqrt{\nu_{\rm PSD}})$ up to logarithmic accuracy
factors. Hence, the dominant semidefinite contribution to one solution
of problem~\eqref{eq:final_opt} can be characterized as
\begin{equation}
    \mathcal O
    \left(
        \sqrt{\nu_{\rm PSD}}\,
        \mathcal C_{\rm PSD}
    \right).
\label{eq:sca_subproblem_complexity}
\end{equation}
For $I_{\rm SCA}$ outer iterations, the corresponding contribution
to the overall computational cost is
\begin{equation}
    \mathcal O
    \left(
        I_{\rm SCA}
        \sqrt{\nu_{\rm PSD}}\,
        \mathcal C_{\rm PSD}
    \right).
\label{eq:overall_complexity}
\end{equation}
In particular, for fixed $K$, $L$, $N_s$, and $N_e$, the dominant
semidefinite scaling with the number of transmit antennas reduces to
$\mathcal O(I_{\rm SCA}N_t^{3.5})$, up to solver-dependent and
logarithmic accuracy factors.

\section{Numerical Results}\label{sec:Num}

This section evaluates the proposed design in terms of secrecy, sensing deception, Alice-side Eve-angle estimation accuracy, and algorithm convergence. The results first verify controlled ghost formation for single- and multiple-Eve scenarios, and then compare the robust design with the considered benchmarks. The effects of sensing accuracy, angular uncertainty, and deception power are also examined. 

\subsection{Simulation Setup}\label{subsec:sim_setup}

Unless otherwise specified, Alice is placed at  $\mathbf p_A=(0,0)\,\mathrm{m}$, Bob at  $\mathbf p_{B,1} = (15,20)\,\mathrm{m}$,  and a single passive Eve at  $\mathbf p_{E,1} = (-15,15)\,\mathrm{m}$. Alice and the Eves employ half-wavelength-spaced ULAs, while the Bobs
are single-antenna users.

The Alice--Bob and Alice--Eve channels follow a Rician block-fading model:
\begin{align}
\mathbf h_{B,k}
&=
\sqrt{\beta_{AB,k}}
\Bigg(
\sqrt{\frac{K_{\rm ric}^{AB}}
{K_{\rm ric}^{AB}+1}}\,
\mathbf a_A(\theta_{AB,k})
\notag\\
&\qquad\qquad
+
\sqrt{\frac{1}
{K_{\rm ric}^{AB}+1}}\,
\mathbf g_{AB,k}
\Bigg),
\label{eq:sim_ab_channel}
\\
\mathbf H_{AE,\ell}
&=
\sqrt{\beta_{AE,\ell}}
\Bigg(
\sqrt{\frac{K_{\rm ric}^{AE}}
{K_{\rm ric}^{AE}+1}}\,
\mathbf a_E(\theta_{AE,\ell})
\mathbf a_A^H(\theta_{AE,\ell})
\notag\\
&\qquad\qquad
+
\sqrt{\frac{1}
{K_{\rm ric}^{AE}+1}}\,
\mathbf G_{AE,\ell}
\Bigg).
\label{eq:sim_ae_channel}
\end{align}
where $\beta_{AB,k}=\beta_0(d_0/d_{AB,k})^{\eta_{AB}}$ and $\beta_{AE,\ell}=\beta_0(d_0/d_{AE,\ell})^{\eta_{AE}}$. Moreover, $\mathbf g_{AB,k}\sim\mathcal{CN}(\mathbf0,\mathbf I_{N_t})$, while the entries of $\mathbf G_{AE,\ell}$ are i.i.d. $\mathcal{CN}(0,1)$. 
The NLoS realizations are drawn once per channel block and kept fixed across the candidate Eve angles within that block.
The passive-reflection coefficient from Bob $k$ to Eve $\ell$ is modeled as \cite{Musallam}
\begin{equation}
\rho_{\ell k}
=
\rho_0 \beta_{AB,k}
\left(
\frac{d_0}{d_{EB,\ell k}}
\right)^{\eta_{EB}}
\sigma_{\mathrm{RCS},k},
\end{equation}
where $\sigma_{\mathrm{RCS},k}$ denotes the effective RCS of the physical user associated with Bob $k$ and is modeled as log-normal with median $-4$~dBsm and standard deviation $4$~dB.

For Alice-side monostatic sensing, the average echo-power gain $\xi_{A,\ell}$ is modeled as
\begin{equation}
\xi_{A,\ell}
=
\xi_{A,0}
\left(
\frac{d_0}{d_{AE,\ell}}
\right)^{\eta_{A,\mathrm{tw}}}
\sigma_{\mathrm{RCS},E,\ell},
\end{equation}
where $\xi_{A,0}$ denotes the reference echo-power gain and $\eta_A^{\mathrm{tw}}$ is the two-way sensing path-loss exponent and $\sigma_{\mathrm{RCS},E,\ell}$ is the effective RCS of Eve $\ell$.

For the numerical evaluation, the angular prior of Eve $\ell$ is
modeled as a Gaussian-shaped density over the same uncertainty
sector $\Theta_{AE,\ell}$
adopted in Section~\ref{sec:BayesianUncertainty}, following the Bayesian angular-uncertainty
modeling in~\cite{Boz_iot}. The sector is centered at
the prior mean $\mu_{\ell}^{\rm prior}$ and defined as
\begin{equation}
    \theta_{\ell,1}
    =
    \mu_{\ell}^{\rm prior}
    -
    c_{\rm sup}\sigma_{\ell}^{\rm prior},
    \qquad
    \theta_{\ell,2}
    =
    \mu_{\ell}^{\rm prior}
    +
    c_{\rm sup}\sigma_{\ell}^{\rm prior},
\end{equation}
where $\sigma_{\ell}^{\rm prior}$ denotes the prior angular spread
and $c_{\rm sup}>0$ controls the sector width. The prior density is
normalized over this interval, and the corresponding prior Fisher
information $J_{P,\ell}$ is evaluated according to~\eqref{eq:prior_fisher_lemma}.

\begin{table}[t]
\caption{Simulation parameters.}
\label{tab:simulation_parameters}
\centering
\scriptsize
\renewcommand{\arraystretch}{1.05}
\setlength{\tabcolsep}{0.5pt}

\begin{tabularx}{\columnwidth}{
@{}
>{\raggedright\arraybackslash}p{0.235\columnwidth}
@{}
>{\centering\arraybackslash}p{0.215\columnwidth}
@{\hspace{7pt}}
>{\raggedright\arraybackslash}p{0.235\columnwidth}
@{}
>{\centering\arraybackslash}p{0.265\columnwidth}
@{}
}
\hline
\multicolumn{1}{l}{Parameter} &
\multicolumn{1}{l}{Value} &
\multicolumn{1}{l}{Parameter} &
\multicolumn{1}{l}{Value}
\\
\hline

$N_t,N_e$
& $8,4$
&
$K_{\rm ric}^{AB},K_{\rm ric}^{AE}$
& $5,5$
\\

$\eta_{AB},\eta_{AE},\eta_{EB}$
& $2.2,2.2,2.2$
&
$d_0,\beta_0,\rho_0$
& $1~{\rm m},1,1$
\\

$\sigma_b^2,\sigma_e^2,\sigma_{pE}^2$
& $10^{-6}$
&
$\sigma_{rA}^2$
& $10^{-4}$
\\

$\gamma_{B,\min},\Gamma_E$
& $1,0.63$
&
$L_A,N_s$
& $16,21$
\\

$\sigma_{\ell}^{\rm prior},c_{\rm sup}$
& $1.667^\circ,3$
&
$\Delta_g,\epsilon_g$
& $30^\circ,4^\circ$
\\

$\sigma_{\mathrm{RCS},E}$
& $1~\mathrm{m}^2$
&
MC realizations
& $200$
\\

$\mu$
& $5\times10^{-2}$
&
$\lambda_s,\lambda_g,\lambda_b,\lambda_z$
& $0.40,0.25,0.10,0.25$
\\

$\xi_{A,0}$                  & $10^{1.1}$ 
&
$\eta_A^{\mathrm{tw}}$       & $4$ \\

\hline
\end{tabularx}
\end{table}

The objective weights are selected to represent a secrecy-oriented operating point: $\lambda_s=0.40$ gives the largest emphasis to secrecy, while $\lambda_g=\lambda_z=0.25$ and $\lambda_b=0.10$ provide a balance between deception strength, deception power, and Alice-side sensing accuracy.

The transmit-power budget is varied as indicated in each figure and
is reported in dBm. All convex subproblems are implemented in MATLAB
and solved using CVX. Unless otherwise stated, the parameters in
Table~\ref{tab:simulation_parameters} are used. 

\subsection{Benchmark and Comparison Schemes}

To isolate the contributions of sensing privacy and robustness to eavesdropper uncertainty, three progressively enhanced secure ISAC schemes are considered.

For a fair comparison, all schemes use the same channel realizations, power budget, Bob QoS requirements, and Alice-side BCRB objective. The same weights and normalization factors are retained for all common objective terms. Secure ISAC (S-ISAC) omits only the ghost-dominance objective and constraint, while Secure ISAC with Sensing Privacy (S-ISAC-SP) and the proposed scheme use the same sensing-deception terms and differ in their treatment of Eve-angle
uncertainty.

\subsubsection{Secure ISAC (S-ISAC)}

The S-ISAC benchmark accounts for communication secrecy and Alice-side sensing performance, but does not impose any sensing-privacy requirement at Eve. The covariance design is performed using the nominal Eve geometry, and no ghost-dominance constraint or objective is included. This scheme therefore represents a conventional secure ISAC design in which security is characterized solely through Eve's data-decoding capability \cite{Su2024BCRB}.

\subsubsection{Secure ISAC with Sensing Privacy (S-ISAC-SP)}

The S-ISAC-SP scheme extends S-ISAC by additionally enforcing sensing privacy at Eve, following the general idea of sensing-security and controlled-deception designs   \cite{ Yang2026DualSecurity}. In particular, the ghost-dominance requirement is constructed using the nominal Eve geometry and its corresponding Alice--Eve channel, Bob--Eve bearing, reflected-power coefficient, and prescribed virtual direction. Hence, sensing privacy is promoted at the estimated Eve direction, without accounting for angular uncertainty.

\subsubsection{Secure ISAC with Sensing Privacy under Eavesdropper Uncertainty (Proposed)}

The proposed scheme further accounts for eavesdropper uncertainty by considering the complete set of geometry-consistent angular samples. For each candidate geometry, the Alice--Eve channel, Bob--Eve bearing, reflected-power coefficient, and prescribed virtual direction are jointly generated from the same Eve location. The information covariances and deception covariance are then jointly optimized under the robust Eve-decoding and ghost-dominance requirements over the sampled Eve geometries.

\subsection{Simulation Results}

\begin{figure*}[t]
    \centering

    \subfloat[Passive response at Eve 1]{%
    \begin{minipage}{0.43\textwidth}
    \centering
    \begin{tikzpicture}
    \begin{axis}[
        width=\linewidth,
        height=0.55\linewidth,
        xlabel={Scan angle $\vartheta$ (deg)},
        ylabel={Normalized response (dB)},
        xmin=-90, xmax=90,
        ymin=-22, ymax=1,
        xtick={-90,-60,-30,0,30,60,90},
        grid=both,
        minor grid style={gray!20},
        major grid style={gray!45},
        tick label style={font=\scriptsize},
        label style={font=\small},
        title style={font=\small},
        legend columns=1,
        legend cell align=left,
        legend style={
            font=\tiny,
            at={(1,0.00)},
            anchor=south east,
            draw=black,
            fill=white,
            inner sep=1.5pt
        },
        legend image post style={line width=1pt},
    ]

    \addplot[blue, line width=2pt]
    table[x=theta, y=BaselineA, col sep=space]
    {Figures3/passive_beampattern_2eve_eve1.dat};
    \addlegendentry{S-ISAC}


    \addplot[red, line width=2pt, solid]
    table[x=theta, y=Proposed, col sep=space]
    {Figures3/passive_beampattern_2eve_eve1.dat};
    \addlegendentry{Proposed}

    \addplot[black, dash dot, line width=1.5pt]
    coordinates {(-70.71,-22) (-70.71,1)};
    
    \addplot[black, dotted, line width=1.5pt]
    coordinates {(-40.71,-22) (-40.71,1)};
    
    \node[font=\scriptsize, rotate=90, anchor=south]
    at (axis cs:-70.71,-12.5) {True Bob};
    
    \node[font=\scriptsize, rotate=90, anchor=south]
    at (axis cs:-40.71,-12.5) {Ghost};

    \end{axis}
    \end{tikzpicture}
    \end{minipage}
    }
    \hspace{0.025\textwidth}
    \subfloat[Passive response at Eve 2]{%
    \begin{minipage}{0.43\textwidth}
    \centering
    \begin{tikzpicture}
    \begin{axis}[
        width=\linewidth,
        height=0.56\linewidth,
        xlabel={Scan angle $\vartheta$ (deg)},
        ylabel={Normalized response (dB)},
        xmin=-90, xmax=90,
        ymin=-35, ymax=1,
        xtick={-90,-60,-30,0,30,60,90},
        grid=both,
        minor grid style={gray!20},
        major grid style={gray!45},
        tick label style={font=\scriptsize},
        label style={font=\small},
        title style={font=\small},
        legend columns=1,
        legend cell align=left,
        legend style={
            font=\tiny,
            at={(0.00,0.00)},
            anchor=south west,
            draw=black,
            fill=white,
            inner sep=1.5pt
        },
        legend image post style={line width=1pt},
    ]

    \addplot[blue, line width=2pt]
    table[x=theta, y=BaselineA, col sep=space]
    {Figures3/passive_beampattern_2eve_eve2.dat};
    \addlegendentry{S-ISAC}


    \addplot[red, line width=2pt, solid]
    table[x=theta, y=Proposed, col sep=space]
    {Figures3/passive_beampattern_2eve_eve2.dat};
    \addlegendentry{Proposed}

    \addplot[black, dash dot, line width=1.5pt]
    coordinates {(19.98,-35) (19.98,1)};
    
    \addplot[black, dotted, line width=1.5pt]
    coordinates {(49.98,-35) (49.98,1)};
    
    \node[font=\scriptsize, rotate=90, anchor=south]
    at (axis cs:19.98,-28) {True Bob};
    
    \node[font=\scriptsize, rotate=90, anchor=south]
    at (axis cs:49.98,-28) {Ghost};

    \end{axis}
    \end{tikzpicture}
    \end{minipage}
    }

    \caption{Passive angular responses for the two-Eve case with
    $P_{\max}=30$ dBm.}
    \label{fig:beam_two_eves}
\end{figure*}

Fig.~\ref{fig:beam_two_eves} evaluates the ability of the transmit covariance to control the dominant direction observed by the Eves' passive angular sensing in an asymmetric two-Eve configuration, where $\mathbf p_{B,1}=(25,-12)\,\mathrm{m}$, $\mathbf p_{E,1}=(18,8)\,\mathrm{m}$, and $\mathbf p_{E,2}=(14,-16)\,\mathrm{m}$, with a prescribed ghost angular offset of $\Delta_g=30^\circ$. The normalized passive angular response of each Eve is shown over the scan-angle range, highlighting the dominant direction inferred through passive angular sensing. Under S-ISAC, the dominant passive response remains aligned with the true Bob direction, allowing each Eve to infer the corresponding angular information. In contrast, the proposed scheme shifts the dominant response toward the prescribed ghost direction, thereby deceiving both Eves with the intended angular offset. This demonstrates the effectiveness of the proposed sensing-privacy mechanism and its ability to simultaneously protect against multiple Eves.

\begin{figure}[t]
\centering
\begin{tikzpicture}
\begin{axis}[
    width=0.75\linewidth,
    height=0.65\linewidth,
    xlabel={Transmit power $P_{\max}$ (dBm)},
    ylabel={Average secrecy rate (bps/Hz)},
    xmin=25, xmax=50,
    ymin=5, ymax=15,
    xtick={25,30,35,40,45,50},
    ytick={5,6,7,8,9,10,11,12,13,14,15},
    grid=both,
    minor grid style={gray!20},
    major grid style={gray!45},
    tick label style={font=\scriptsize},
    label style={font=\small},
    legend columns=1,
    legend cell align=left,
    legend style={
        font=\scriptsize,
        at={(0,1)},
        anchor=north west,
        draw=black,
        fill=white,
        inner sep=1.5pt,
        row sep=1pt,
        column sep=4pt
    },
    legend image post style={line width=1pt},
]

\addplot[
    blue,
    mark=o,
    mark options={solid},
    mark size=2pt,
    line width=2pt
]
table[
    x=Pmax_dBm,
    y=S_ISAC_SRavg,
    col sep=space
]
{Figures3/average_SR_under_angular_uncertainty_vs_power.dat};
\addlegendentry{S-ISAC}

\addplot[
    black,
    mark=diamond,
    mark options={solid},
    mark size=2pt,
    line width=2pt
]
table[
    x=Pmax_dBm,
    y=S_ISAC_SP_SRavg,
    col sep=space
]
{Figures3/average_SR_under_angular_uncertainty_vs_power.dat};
\addlegendentry{S-ISAC-SP}

\addplot[
    red,
    mark=triangle,
    mark options={solid},
    mark size=2pt,
    line width=2pt
]
table[
    x=Pmax_dBm,
    y=Proposed_SRavg,
    col sep=space
]
{Figures3/average_SR_under_angular_uncertainty_vs_power.dat};
\addlegendentry{Proposed}

\end{axis}
\end{tikzpicture}
\caption{Average SR under angular uncertainty versus the transmit-power budget for the single-Bob single-Eve setup.}
\label{fig:sr_avg_power}
\end{figure}

Fig.~\ref{fig:sr_avg_power} shows the average SR versus Alice's transmit-power budget for the single-Bob single-Eve configuration under Eve-angle uncertainty. The SR increases with $P_{\max}$ for all schemes, as the larger power budget provides greater flexibility to strengthen Bob's useful signal while limiting Eve's decoding capability. S-ISAC and S-ISAC-SP exhibit similar performance, with S-ISAC-SP incurring a small SR penalty due to the additional resources devoted to sensing privacy. However, both schemes are designed for the nominal Eve geometry and their security performance deteriorates as the actual Eve direction deviates from this nominal point. In contrast, the proposed scheme explicitly accounts for the complete angular uncertainty sector and therefore maintains effective Eve-decoding protection across the possible Eve directions. Consequently, it achieves a substantially higher average SR, with the performance gap becoming more pronounced as $P_{\max}$ increases. This demonstrates the benefit of incorporating Eve-angle uncertainty directly into the joint communication-security and sensing-privacy design.

\begin{figure}[t]
\centering
\begin{tikzpicture}
\begin{axis}[
    width=0.75\linewidth,
    height=0.65\linewidth,
    xlabel={Transmit power $P_{\max}$ (dBm)},
    ylabel={RMSE of Alice's Eve-angle estimate (rad)},
    xmin=25, xmax=50,
    ymin=0, ymax=0.08,
    xtick={25,30,35,40,45,50},
    ytick={0,0.01,0.02,0.03,0.04,0.05,0.06,0.07,0.08},
    yticklabels={0,1,2,3,4,5,6,7,8},
    scaled y ticks=false,
    extra description/.code={
        \node[anchor=south east,font=\scriptsize]
        at (rel axis cs:0.15,1.01) {$\times 10^{-2}$};
    }, 
    grid=both,
    minor grid style={gray!20},
    major grid style={gray!45},
    tick label style={font=\scriptsize},
    label style={font=\small},
    legend columns=1,
    legend cell align=left,
    legend style={
        font=\scriptsize,
        at={(1,1)},
        anchor=north east,
        draw=black,
        fill=white,
        inner sep=2pt,
        row sep=-1pt
    },
    legend image post style={line width=0.9pt},
]

\addplot[
    blue,
    mark=o,
    mark options={solid},
    mark size=2pt,
    line width=2pt
]
table[
    x=Pmax_dBm,
    y=S_ISAC_RMSE_rad,
    col sep=space
]
{Figures3/alice_angle_RMSE_vs_power_rad.dat};
\addlegendentry{S-ISAC}

\addplot[
    black,
    mark=diamond,
    mark options={solid},
    mark size=2.0pt,
    line width=2pt
]
table[
    x=Pmax_dBm,
    y=S_ISAC_SP_RMSE_rad,
    col sep=space
]
{Figures3/alice_angle_RMSE_vs_power_rad.dat};
\addlegendentry{S-ISAC-SP}

\addplot[
    red,
    mark=triangle,
    mark options={solid},
    mark size=2.0pt,
    line width=2pt
]
table[
    x=Pmax_dBm,
    y=Proposed_RMSE_rad,
    col sep=space
]
{Figures3/alice_angle_RMSE_vs_power_rad.dat};
\addlegendentry{Proposed}

\end{axis}
\end{tikzpicture}
\caption{Empirical RMSE of Alice's Eve-angle estimate versus the transmit-power budget.}
\label{fig:rmse_power}
\end{figure}

Fig.~\ref{fig:rmse_power} reports the empirical RMSE of Alice's Eve-angle estimate versus the transmit-power budget. 
The estimation error decreases monotonically with $P_{\max}$ for all schemes, since the additional transmit power improves the sensing information available at Alice. S-ISAC and S-ISAC-SP achieve comparable performance, although the latter exhibits a moderately larger RMSE because part of the available spatial and power resources must also support the prescribed sensing-deception response. The proposed scheme consistently achieves the lowest RMSE over the entire power range. By explicitly accounting for the possible Eve geometries in the covariance design, it produces a transmit covariance that remains informative for Eve-angle estimation throughout the uncertainty sector rather than being tailored primarily to the nominal geometry. The gain becomes increasingly pronounced as $P_{\max}$ grows, indicating that the proposed design can exploit the additional transmit resources to improve Alice's localization capability while simultaneously maintaining robust communication security and sensing privacy.

\begin{figure}[t]
\centering

\begin{minipage}[t]{0.485\linewidth}
    \centering
    \begin{tikzpicture}
    \begin{axis}[
        width=1.00\linewidth,
        height=1.3\linewidth,
        xlabel={SCA iteration},
        ylabel={Secrecy rate lower bound (bps/Hz)},
        xmin=0, xmax=5,
        ymin=0, ymax=12,
        xtick={0,1,2,3,4,5},
        ytick={0,2,4,6,8,10,12},
        scaled y ticks=false,
        grid=both,
        minor grid style={gray!20},
        major grid style={gray!45},
        tick label style={font=\tiny},
        label style={font=\scriptsize},
        legend columns=1,
        legend cell align=left,
        legend style={
            font=\tiny,
            at={(1,0)},
            anchor=south east,
            draw=black,
            fill=white,
            inner sep=1pt,
            row sep=-1pt
        },
        legend image post style={line width=0.8pt},
    ]

    \addplot[
        blue,
        mark=o,
        mark options={solid},
        mark size=1.5pt,
        line width=1.5pt
    ]
    table[
        x=iter,
        y=SR_25dBm,
        col sep=space,
        comment chars={\#}
    ]
    {Figures3/sca_convergence_3curves.dat};
    \addlegendentry{$P_{\max}=25$ dBm}

    \addplot[
        red,
        mark=square,
        mark options={solid},
        mark size=1.5pt,
        line width=1.5pt
    ]
    table[
        x=iter,
        y=SR_30dBm,
        col sep=space,
        comment chars={\#}
    ]
    {Figures3/sca_convergence_3curves.dat};
    \addlegendentry{$P_{\max}=30$ dBm}

    \addplot[
        black,
        mark=triangle,
        mark options={solid},
        mark size=1.5pt,
        line width=1.5pt
    ]
    table[
        x=iter,
        y=SR_40dBm,
        col sep=space,
        comment chars={\#}
    ]
    {Figures3/sca_convergence_3curves.dat};
    \addlegendentry{$P_{\max}=40$ dBm}

    \end{axis}
    \end{tikzpicture}

    \vspace{-1.5mm}
    {\scriptsize (a)}
\end{minipage}
\hfill
\begin{minipage}[t]{0.485\linewidth}
    \centering
    \begin{tikzpicture}
    \begin{axis}[
        width=1.00\linewidth,
        height=1.30\linewidth,
        xlabel={SCA iteration},
        ylabel={Average Alice root-BCRB (rad)},
        xmin=0, xmax=5,
        ymin=0.01, ymax=0.032,
        xtick={0,1,2,3,4,5},
        ytick={0.01,0.015,0.02,0.025,0.03},
        yticklabels={1,1.5,2,2.5,3},
        scaled y ticks=false,
        extra description/.code={
            \node[anchor=south west,font=\scriptsize]
            at (rel axis cs:-0.1,1.01) {$\times 10^{-2}$};
        },
        grid=both,
        minor grid style={gray!20},
        major grid style={gray!45},
        tick label style={font=\tiny},
        label style={font=\scriptsize},
        legend columns=1,
        legend cell align=left,
        legend style={
            font=\tiny,
            at={(1,0.7)},
            anchor=north east,
            draw=black,
            fill=white,
            inner sep=1pt,
            row sep=-1pt
        },
        legend image post style={line width=0.8pt},
    ]

    \addplot[
        blue,
        mark=o,
        mark options={solid},
        mark size=1.5pt,
        line width=1.5pt
    ]
    table[
        x=iter,
        y expr={\thisrow{BCRB_25dBm}*0.0174532925199433},
        col sep=space,
        comment chars={\#}
    ]
    {Figures3/sca_convergence_3curves.dat};
    \addlegendentry{$P_{\max}=25$ dBm}

    \addplot[
        red,
        mark=square,
        mark options={solid},
        mark size=1.5pt,
        line width=1.5pt
    ]
    table[
        x=iter,
        y expr={\thisrow{BCRB_30dBm}*0.0174532925199433},
        col sep=space,
        comment chars={\#}
    ]
    {Figures3/sca_convergence_3curves.dat};
    \addlegendentry{$P_{\max}=30$ dBm}

    \addplot[
        black,
        mark=triangle,
        mark options={solid},
        mark size=1.5pt,
        line width=1.5pt
    ]
    table[
        x=iter,
        y expr={\thisrow{BCRB_40dBm}*0.0174532925199433},
        col sep=space,
        comment chars={\#}
    ]
    {Figures3/sca_convergence_3curves.dat};
    \addlegendentry{$P_{\max}=40$ dBm}

    \end{axis}
    \end{tikzpicture}

    \vspace{-1.5mm}
    {\scriptsize (b)}
\end{minipage}

\caption{Convergence of the proposed SCA algorithm in terms of secrecy rate lower bound and Alice root-BCRB.}
\label{fig:sca_convergence}
\end{figure}

Fig.~\ref{fig:sca_convergence} illustrates the convergence
behavior of the proposed SCA algorithm for various transmit-power
budgets. Starting from the feasible initialization, the SR
lower bound increases sharply after the first SCA update, while the
root-BCRB of Alice's Eve-angle estimate decreases accordingly. The
changes are more pronounced at higher $P_{\max}$, where the larger
power budget provides greater flexibility for the covariance update
to improve the communication-security and sensing objectives relative
to the initial feasible point. After this initial adjustment, both
metrics rapidly stabilize, with only marginal changes in the subsequent
iterations. As expected, increasing $P_{\max}$ leads to a higher
converged SR lower bound and a lower root-BCRB. Overall, the
results show that the proposed SCA procedure reaches a stable operating
point within only a few iterations across the considered power levels.

\begin{figure}[t]
    \centering
    \begin{tikzpicture}
    \begin{axis}[
        width=0.75\linewidth,
        height=0.65\linewidth,
        xlabel={Number of Bobs},
        ylabel={Average secrecy rate (bps/Hz)},
        xmin=1, xmax=6,
        ymin=0, ymax=11.5,
        xtick={1,2,3,4,5,6},
        scaled y ticks=false,
        grid=both,
        minor grid style={gray!20},
        major grid style={gray!45},
        tick label style={font=\scriptsize},
        label style={font=\small},
        legend columns=1,
        legend cell align=left,
        legend style={
            font=\scriptsize,
            at={(1,1)},
            anchor=north east,
            draw=black,
            fill=white,
            inner sep=1.5pt,
            row sep=1pt,
            column sep=5pt
        },
        legend image post style={line width=1pt},
    ]

    \addplot[
        blue,
        mark=o,
        mark options={solid},
        mark size=2.0,
        line width=2pt,
        solid
    ]
    table[x=K, y=Proposed_1E, col sep=space]
    {Figures3/sr-Bobs_new.dat};
    \addlegendentry{Proposed, $L=1$}

    \addplot[
        red,
        mark=square,
        mark options={solid},
        mark size=2.0,
        line width=2pt,
        solid
    ]
    table[x=K, y=Proposed_2E, col sep=space]
    {Figures3/sr-Bobs_new.dat};
    \addlegendentry{Proposed, $L=2$}

    \addplot[
        black,
        mark=triangle,
        mark options={solid},
        mark size=2,
        line width=2pt,
        solid
    ]
    table[x=K, y=Proposed_3E, col sep=space]
    {Figures3/sr-Bobs_new.dat};
    \addlegendentry{Proposed, $L=3$}



    \end{axis}
    \end{tikzpicture}
   \caption{Average secrecy rate versus the number of Bobs for different numbers of Eves with $P_{\max}=40$~dBm.}
    \label{fig:sr_vs_bobs}
\end{figure}

In Fig.~\ref{fig:sr_vs_bobs}, the average worst-user SR is evaluated versus the number of served Bobs for different numbers of Eves, namely $L=1$, $2$, and $3$, with $P_{\max}=40$~dBm. As expected, the SR decreases as the number of Bobs increases for all considered values of $L$. Serving additional Bobs requires the fixed transmit-power budget and spatial degrees of freedom to be shared among more information streams, while simultaneously increasing inter-user interference and the number of secrecy constraints. The performance degradation becomes more pronounced as the number of Eves increases, since each additional Eve introduces further robust Eve-decoding constraints across all Bobs and an additional ghost-dominance requirement for its selected Bob.
Hence, the design becomes progressively more constrained as both $K$ and $L$ increase, explaining the steeper degradation observed for $L=2$ and particularly for $L=3$. Nevertheless, for a single Eve, the proposed scheme retains a substantial SR even under the most heavily loaded configuration considered, remaining above $5$~bps/Hz for $K=6$. This indicates that the proposed robust design can support a relatively large number of legitimate users with satisfactory secrecy performance when the number of Eves is moderate.

\begin{figure}[t]
    \centering

    \subfloat[\label{fig:uncertainty_sr}]{
    \begin{minipage}[t]{0.485\linewidth}
        \centering
        \begin{tikzpicture}
        \begin{axis}[
            width=1.0\linewidth,
            height=1.30\linewidth,
            xlabel={Uncertainty half-width $\Delta_{\theta}$ (deg)},
            ylabel={Average worst-user secrecy rate (bit/s/Hz)},
            xmin=0, xmax=10,
            ymin=7.2, ymax=12.8,
            xtick={0,2,4,6,8,10},
            scaled y ticks=false,
            grid=both,
            minor grid style={gray!20},
            major grid style={gray!45},
            tick label style={font=\tiny},
            label style={font=\scriptsize},
            legend columns=1,
            legend cell align=left,
            legend style={
                font=\tiny,
                at={(1,1)},
                anchor=north east,
                draw=black,
                fill=white,
                inner sep=1pt,
                row sep=-1pt
            },
            legend image post style={line width=0.8pt},
        ]

        \addplot[
            blue,
            mark=o,
            mark options={solid},
            mark size=1.5pt,
            line width=1.5pt,
            smooth,
            tension=0.55
        ]
        table[
            x=Udeg,
            y=SRwc_30dBm,
            col sep=space
        ]
        {Figures3/sr_uncertainty.dat};
        \addlegendentry{$P_{\max}=30$ dBm}

        \addplot[
            red,
            mark=square,
            mark options={solid},
            mark size=1.5pt,
            line width=1.5pt,
            smooth,
            tension=0.55
        ]
        table[
            x=Udeg,
            y=SRwc_35dBm,
            col sep=space
        ]
        {Figures3/sr_uncertainty.dat};
        \addlegendentry{$P_{\max}=35$ dBm}

        \addplot[
            black,
            mark=triangle,
            mark options={solid},
            mark size=1.5pt,
            line width=1.5pt,
            smooth,
            tension=0.55
        ]
        table[
            x=Udeg,
            y=SRwc_40dBm,
            col sep=space
        ]
        {Figures3/sr_uncertainty.dat};
        \addlegendentry{$P_{\max}=40$ dBm}

        \end{axis}
        \end{tikzpicture}
    \end{minipage}
    }
    \hspace{-0.04\linewidth}%
    \subfloat[\label{fig:uncertainty_power}]{
    \begin{minipage}[t]{0.485\linewidth}
        \centering
        \begin{tikzpicture}
        \begin{axis}[
            width=1.0\linewidth,
            height=1.30\linewidth,
            xlabel={Uncertainty half-width $\Delta_{\theta}$ (deg)},
            ylabel={Average deception-power fraction},
            xmin=0, xmax=10,
            ymin=0.25, ymax=0.47,
            xtick={0,2,4,6,8,10},
            scaled y ticks=false,
            grid=both,
            minor grid style={gray!20},
            major grid style={gray!45},
            tick label style={font=\tiny},
            label style={font=\scriptsize},
            legend columns=1,
            legend cell align=left,
            legend style={
                font=\tiny,
                at={(0,1)},
                anchor=north west,
                draw=black,
                fill=white,
                inner sep=1pt,
                row sep=-1pt
            },
            legend image post style={line width=0.8pt},
        ]

        \addplot[
            blue,
            mark=o,
            mark options={solid},
            mark size=1.5pt,
            line width=1.5pt,
            smooth,
            tension=0.55
        ]
        table[
            x=Udeg,
            y=Zfrac_30dBm,
            col sep=space
        ]
        {Figures3/sr_uncertainty.dat};
        \addlegendentry{$P_{\max}=30$ dBm}

        \addplot[
            red,
            mark=square,
            mark options={solid},
            mark size=1.5pt,
            line width=1.5pt,
            smooth,
            tension=0.55
        ]
        table[
            x=Udeg,
            y=Zfrac_35dBm,
            col sep=space
        ]
        {Figures3/sr_uncertainty.dat};
        \addlegendentry{$P_{\max}=35$ dBm}

        \addplot[
            black,
            mark=triangle,
            mark options={solid},
            mark size=1.5pt,
            line width=1.5pt,
            smooth,
            tension=0.55
        ]
        table[
            x=Udeg,
            y=Zfrac_40dBm,
            col sep=space
        ]
        {Figures3/sr_uncertainty.dat};
        \addlegendentry{$P_{\max}=40$ dBm}

        \end{axis}
        \end{tikzpicture}
    \end{minipage}
    }
    
    \caption{Average worst-user secrecy rate and deception-power fraction versus the Eve-angle uncertainty half-width.}
    \label{fig:uncertainty_sr_deception_power}
\end{figure}

In Fig.~\ref{fig:uncertainty_sr}, the average worst-user SR is evaluated versus the Eve-angle uncertainty half-width $\Delta_{\theta}$ for different transmit-power budgets. As $\Delta_{\theta}$ increases, the SR gradually decreases, since the proposed design must satisfy the communication-security and sensing-privacy requirements over a broader range of possible Eve geometries. This expected degradation remains relatively moderate over the considered uncertainty range, demonstrating the robustness of the proposed design to imperfect Eve-angle knowledge. From $\Delta_{\theta}=0^\circ$ to $10^\circ$, the average SR decreases by approximately $11.87\%$, $9.92\%$, and $8.80\%$ for $P_{\max}=30$, $35$, and $40$~dBm, respectively. The smaller relative loss at higher transmit powers indicates that additional transmit resources provide greater flexibility to accommodate angular uncertainty while preserving secrecy performance. In Fig.~\ref{fig:uncertainty_power}, the average deception-power fraction, defined as $\zeta_Z=\operatorname{Tr}(\mathbf Z)/P_{\max}$, is evaluated versus the Eve-angle uncertainty width $\Delta_{\theta}$. The deception-power fraction increases consistently with $\Delta_{\theta}$ for all considered transmit-power budgets, since a broader uncertainty sector requires the prescribed deceptive response to be maintained over a wider range of possible Eve geometries. Nevertheless, $\zeta_Z$ remains below approximately $0.42$ throughout the considered range, indicating that less than half of the available power budget is allocated to deception even under the largest uncertainty. 

\begin{figure}[t]
\centering
\begin{tikzpicture}
\begin{axis}[
    width=0.75\linewidth,
    height=0.65\linewidth,
    xlabel={Deception weight $\lambda_g$},
    ylabel={Average secrecy rate (bit/s/Hz)},
    xmin=0.10, xmax=0.50,
    ymin=5, ymax=12,
    xtick={0.10,0.15,0.20,0.25,0.30,0.35,0.40,0.45,0.50},
    ytick={5,6,7,8,9,10,11,12},
    grid=both,
    minor grid style={gray!20},
    major grid style={gray!45},
    tick label style={font=\scriptsize},
    label style={font=\small},
    legend columns=1,
    legend cell align=left,
    legend style={
        font=\scriptsize,
        at={(1,1)},
        anchor=north east,
        draw=black,
        fill=white,
        inner sep=2pt,
        row sep=-1pt
    },
    legend image post style={line width=0.9pt},
]

\addplot[
    blue,
    mark=o,
    mark options={solid},
    mark size=2pt,
    line width=2pt
]
table[
    x=lambda_g,
    y=SR_P30,
    col sep=space
]
{Figures3/lambda_weight_sr.dat};
\addlegendentry{$P_{\max}=30$ dBm}

\addplot[
    black,
    mark=diamond,
    mark options={solid},
    mark size=2pt,
    line width=2pt
]
table[
    x=lambda_g,
    y=SR_P35,
    col sep=space
]
{Figures3/lambda_weight_sr.dat};
\addlegendentry{$P_{\max}=35$ dBm}

\addplot[
    red,
    mark=triangle,
    mark options={solid},
    mark size=2pt,
    line width=2pt
]
table[
    x=lambda_g,
    y=SR_P40,
    col sep=space
]
{Figures3/lambda_weight_sr.dat};
\addlegendentry{$P_{\max}=40$ dBm}

\end{axis}
\end{tikzpicture}
\caption{Average SR versus the deception weight $\lambda_g$, with $\lambda_s=0.65-\lambda_g$.}
\label{fig:weight_sweep}
\end{figure}

Fig.~\ref{fig:weight_sweep} investigates the sensitivity of the proposed scheme to the objective weights by varying $\lambda_g$ while fixing $\lambda_b=0.10$, $\lambda_z=0.25$, and setting $\lambda_s=0.65-\lambda_g$. The average SR decreases monotonically with $\lambda_g$ for all transmit-power budgets, since greater emphasis is placed on deception while the secrecy weight is reduced. Increasing $P_{\max}$ consistently improves the secrecy rate, reflecting the greater flexibility provided by the additional transmit resources. The adopted setting $\lambda_g=0.25$ and $\lambda_s=0.40$ lies near the transition region and provides a balanced operating point.

\section{Conclusion}\label{sec:Conc}

This paper investigated sensing privacy in secure ISAC under uncertainty in the Eves' angular directions. A robust multi-Eve covariance-design framework was developed to jointly protect confidential communication and manipulate the spatial information inferred through passive sensing toward prescribed deceptive directions. The proposed design jointly accounts for communication security,
Alice's sensing accuracy, sensing privacy at the Eves, and Eve-angle
uncertainty within a unified optimization framework. Numerical results showed that conventional secure-ISAC designs do not necessarily ensure sensing privacy and that ignoring Eve-angle uncertainty can significantly degrade both secrecy and deception performance. In contrast, the proposed scheme consistently outperformed the benchmark designs, maintained controlled sensing deception for single- and multiple-Eve scenarios, and remained effective even for higher angular uncertainties.

\appendices 

\appendices
\section{Proof of Lemma~\ref{lem:eve_bcrb}}
\label{app:bcrb_proof}

From the monostatic sensing model in
\eqref{eq:alice_sensing_model}, the conditional mean of Alice's
sensing observation for Eve $\ell$ at snapshot $q$ is
\begin{equation}
    \boldsymbol{\mu}_{A,\ell}
    [q;\theta,\alpha_{A,\ell}]
    =
    \alpha_{A,\ell}
    \mathbf A(\theta)
    \mathbf x[q].
\end{equation}
Its derivative with respect to the angular parameter is
\begin{equation}
    \frac{
        \partial
        \boldsymbol{\mu}_{A,\ell}
        [q;\theta,\alpha_{A,\ell}]
    }{
        \partial\theta
    }
    =
    \alpha_{A,\ell}
    \dot{\mathbf A}(\theta)
    \mathbf x[q].
\end{equation}

Given the transmitted sensing snapshots, the classical Fisher
information matrix associated with $\boldsymbol{\eta}_{\ell}$ has the block structure
\begin{equation}
    \mathbf J_{\ell}
    \left(
        \boldsymbol{\eta}_{\ell}
    \right)
    =
    \begin{bmatrix}
        J_{\theta\theta,\ell}
        &
        \mathbf J_{\theta\alpha,\ell}
        \\
        \mathbf J_{\theta\alpha,\ell}^{T}
        &
        \mathbf J_{\alpha\alpha,\ell}
    \end{bmatrix},
\end{equation}
where $\mathbf J_{\theta\alpha,\ell}$ contains the
cross-information terms between the angular parameter and the real
and imaginary parts of $\alpha_{A,\ell}$.

For independent circularly symmetric complex Gaussian sensing noise
with covariance $\sigma_{rA}^2\mathbf I_{N_t}$, the angle--angle
entry is
\begin{align}
    J_{\theta\theta,\ell}
    \left(
        \theta,\alpha_{A,\ell}
    \right)
    &=
    \frac{2}{\sigma_{rA}^2}
    \sum_{q=1}^{L_A}
    \Re
    \left\{
        \left(
            \frac{
                \partial
                \boldsymbol{\mu}_{A,\ell}[q]
            }{
                \partial\theta
            }
        \right)^H
        \left(
            \frac{
                \partial
                \boldsymbol{\mu}_{A,\ell}[q]
            }{
                \partial\theta
            }
        \right)
    \right\}
    \nonumber\\
    &=
    \frac{
        2|\alpha_{A,\ell}|^2
    }{
        \sigma_{rA}^2
    }
    \sum_{q=1}^{L_A}
    \mathbf x^H[q]
    \mathbf Q_A(\theta)
    \mathbf x[q].
    \label{eq:app_Jtt}
\end{align}
Since $\mathbf Q_A(\theta)$ is Hermitian positive semidefinite, the
quadratic form in \eqref{eq:app_Jtt} is real and nonnegative. Using the cyclic property of the trace operator, the quadratic form
in \eqref{eq:app_Jtt} can be written as
\begin{equation}
    \mathbf x^H[q]
    \mathbf Q_A(\theta)
    \mathbf x[q]
    =
    \operatorname{Tr}
    \left(
        \mathbf Q_A(\theta)
        \mathbf x[q]\mathbf x^H[q]
    \right).
\end{equation}

 Therefore, averaging
\eqref{eq:app_Jtt} over the transmitted snapshots, the reflection
coefficient, and the angular prior yields
\begin{align}
    \overline J_{D,\ell}(\mathbf R_x)
    &=
    \mathbb E_{\theta,\alpha_{A,\ell},\mathbf x}
    \left\{
        J_{\theta\theta,\ell}
        \left(
            \theta,\alpha_{A,\ell}
        \right)
    \right\}
    \nonumber\\
    &=
    \int_{\theta_{\ell,1}}^{\theta_{\ell,2}}
    \frac{
        2L_A\xi_{A,\ell}
    }{
        \sigma_{rA}^2
    }
    \operatorname{Tr}
    \left(
        \mathbf Q_A(\theta)
        \mathbf R_x
    \right)
    p_\ell(\theta)
    \,d\theta
    \nonumber\\
    &=
    \operatorname{Tr}
    \left(
        \mathbf Q_{B,\ell}
        \mathbf R_x
    \right),
    \label{eq:app_average_data_information}
\end{align}
where $\mathbf Q_{B,\ell}$ is defined in
\eqref{eq:QB_lemma}.

The data cross-information terms between
$\theta_{AE,\ell}$ and the real and imaginary parts of
$\alpha_{A,\ell}$ contain factors proportional to
$\alpha_{A,\ell}$ or $\alpha_{A,\ell}^{*}$. Hence, using
$\mathbb E\{\alpha_{A,\ell}\}=0$, their Bayesian averages satisfy
\begin{equation}
    \mathbb E
    \left\{
        \mathbf J_{\theta\alpha,\ell}
    \right\}
    =
    \mathbf 0_{1\times2}.
\end{equation}
Furthermore, the independence of $\theta_{AE,\ell}$ and
$\alpha_{A,\ell}$ implies that their prior information contains no
angle--amplitude cross terms. Consequently, the Bayesian Fisher
information matrix has the block-diagonal form
\begin{equation}
    \mathbf J_{B,\ell}
    =
    \begin{bmatrix}
        \operatorname{Tr}
        \left(
            \mathbf Q_{B,\ell}
            \mathbf R_x
        \right)
        +
        J_{P,\ell}
        &
        \mathbf 0_{1\times2}
        \\
        \mathbf 0_{2\times1}
        &
        \mathbf J_{B,\alpha\alpha,\ell}
    \end{bmatrix},
\end{equation}
where $\mathbf J_{B,\alpha\alpha,\ell}$ denotes the Bayesian
information block associated with the real and imaginary parts of the reflection coefficient.

Thus, the BCRB of the scalar angular parameter is the $(1,1)$ entry of
$\mathbf J_{B,\ell}^{-1}$. Since the Bayesian Fisher information
matrix is block diagonal,
\begin{equation}
    \mathrm{BCRB}_{\ell}(\mathbf R_x)
    =
    \left[
        \operatorname{Tr}
        \left(
            \mathbf Q_{B,\ell}
            \mathbf R_x
        \right)
        +
        J_{P,\ell}
    \right]^{-1},
\end{equation}
which proves Lemma~\ref{lem:eve_bcrb}.

\section{Proof of Proposition~\ref{prop:continuous_ghost_reformulation}}
\label{app:continuous_ghost_reformulation}

For arbitrary $\ell$, $i$, and $s\in\mathcal S_{\ell i}$, let
$d=N_e-1$. Using the ULA steering vector in (25), the reparameterized
quadratic form admits the finite Fourier expansion
\begin{equation}
    \widetilde{\mathbf a}_E^H(u)
    \mathbf C_{\ell}^{(i)}
    \widetilde{\mathbf a}_E(u)
    =
    \sum_{n=-d}^{d}
    c_{\ell i,n}e^{jnu},
    \label{eq:appendix_fourier_expansion}
\end{equation}
where, for $n=0,\ldots,d$,
\begin{equation}
    c_{\ell i,n}
    =
    \frac{1}{N_e}
    \sum_{m=1}^{N_e-n}
    \left[
        \mathbf C_{\ell}^{(i)}
    \right]_{m,m+n}.
    \label{eq:appendix_fourier_coefficients}
\end{equation}
Since $\mathbf C_{\ell}^{(i)}$ is Hermitian,
$c_{\ell i,-n}=c_{\ell i,n}^{*}$ for $n=1,\ldots,d$, while
$c_{\ell i,0}\in\mathbb R$. Thus, the quadratic form in
\eqref{eq:appendix_fourier_expansion} is real-valued, and
$q_{\ell i}(u)$ in \eqref{eq:response_difference_trig_poly} is a real
trigonometric polynomial of degree at most $d$.

Under the shifted tangent half-angle transformation in
\eqref{eq:tangent_half_angle},
\begin{equation}
    e^{ju}
    =
    e^{j\bar{u}_{\ell i,s}}
    \frac{1+j\tau}{1-j\tau}.
    \label{eq:appendix_tangent_exponential}
\end{equation}
Consequently, multiplication by $(1+\tau^2)^d$, with
$1+\tau^2=(1+j\tau)(1-j\tau)$, clears the denominators of all
trigonometric terms. Hence, $p_{\ell i,s}(\tau)$ in
\eqref{eq:ordinary_ghost_polynomial} is a real polynomial of degree
at most $2d$.

Since each spatial-frequency interval has width strictly smaller than
$2\pi$, its centered half-angle range lies strictly within
$(-\pi/2,\pi/2)$. The shifted tangent transformation is therefore
finite and bijective from $\mathcal U_{\ell i,s}$ onto
$[\tau_{\ell i,s}^{-},\tau_{\ell i,s}^{+}]$. Moreover,
$(1+\tau^2)^d>0$ for every real $\tau$. It follows that the
nonnegativity condition in
\eqref{eq:trig_polynomial_nonnegativity} is equivalent to the
nonnegativity of $p_{\ell i,s}(\tau)$ over
$[\tau_{\ell i,s}^{-},\tau_{\ell i,s}^{+}]$.

By the Markov--Luk\'acs theorem for univariate real polynomials, a
polynomial of degree at most $2d$ is nonnegative on
$[\tau_{\ell i,s}^{-},\tau_{\ell i,s}^{+}]$ if and only if it can be
expressed as the sum of an SOS polynomial of degree at most $2d$ and
\[
    \left(
        \tau-\tau_{\ell i,s}^{-}
    \right)
    \left(
        \tau_{\ell i,s}^{+}-\tau
    \right)
\]
multiplied by an SOS polynomial of degree at most $2(d-1)$. Using the
corresponding Gram-matrix representations with the monomial vectors
defined in Proposition~\ref{prop:continuous_ghost_reformulation}
yields \eqref{eq:interval_sos_identity}, where
$\mathbf Q_{\ell i,s}^{(0)}\succeq\mathbf0$ and
$\mathbf Q_{\ell i,s}^{(1)}\succeq\mathbf0$.

Finally, the coefficients of $q_{\ell i}(u)$, and consequently those of $p_{\ell i,s}(\tau)$, are affine in $\{\mathbf W_k\}_{k=1}^{K}$, $\mathbf Z$
and $\Delta$. Equating the coefficients of the corresponding powers of $\tau$ on both sides of \eqref{eq:interval_sos_identity} therefore produces affine equality constraints in the covariance variables, $\Delta$, and the Gram-matrix entries. This completes the proof.

\bibliographystyle{IEEEtran}
\bibliography{references}

\end{document}